\documentclass[twocolumn,10pt]{IEEEtran}
\usepackage{amssymb}
\usepackage{theorem,amsbsy,pifont,verbatim,amsfonts,amssymb,color}
\usepackage{xspace,epsfig,graphicx,subfigure,latexsym,fancyhdr,cite}
\usepackage{multirow}
\usepackage{epstopdf}
\usepackage{makecell}
\usepackage{bm}
\usepackage{booktabs}
\usepackage{indentfirst}
\usepackage{amsmath}
\usepackage{diagbox}
\usepackage{enumerate}
\usepackage{float}
\usepackage{url}
\usepackage{mdwmath}
\usepackage{mdwtab}
\usepackage{xcolor}
\usepackage{caption}
\usepackage{algorithm}
\usepackage{algpseudocode}
\usepackage[colorlinks,citecolor=blue,linkcolor=blue,urlcolor=blue]{hyperref}

\newtheorem{theorem}{Theorem}
\newtheorem{lemma}{Lemma}
\newtheorem{proposition}{Proposition}
\newtheorem{corollary}{Corollary}
\newtheorem{remark}{Remark}
\newtheorem{definition}{Definition}
\newenvironment{proof}{{\it Proof:}}{\hfill $\blacksquare$\par}
\allowdisplaybreaks

\ifCLASSINFOpdf
\else
\fi

\usepackage{stfloats}

\begin{document}

\title{Fundamental Analysis of Scalable Fluid Antenna Systems: Identifiability Limits, Information Theory, and Joint Processing}

\author{Tuo Wu,  Kai-Kit Wong, \emph{Fellow, IEEE}, Jie Tang, Ye Tian,  Baiyang Liu, Maged Elkashlan,   \\Kin-Fai Tong, \emph{Fellow, IEEE},    Hing Cheung So, \emph{Fellow, IEEE}, Matthew C. Valenti, \emph{Fellow, IEEE},\\ ~Fumiyuki Adachi, \IEEEmembership{Life Fellow,~IEEE},    Kwai-Man Luk, \emph{Life Fellow, IEEE}
\thanks{(\textit{Corresponding author: Kai-Kit Wong.})}
\thanks{T. Wu and J. Tang are with the School of Electronic and Information Engineering, South China University of Technology, Guangzhou 510640, China (E-mail: $\rm  \{wutuo,eejtang\}@scut.edu.cn$).      K.-K. Wong is with the Department of Electronic and Electrical Engineering, University College London, WC1E 6BT London, U.K., and also with the Yonsei Frontier Laboratory and the School of Integrated Technology, Yonsei University, Seoul 03722, South Korea (E-mail:$\rm  kai$-$\rm kit.wong@ucl.ac.uk$). Y. Tian is with the Faculty of Electrical Engineering and Computer Science, Ningbo University, Ningbo 315211, China (E-mail: $\rm tianye1@nbu.edu.cn$). B. Liu and  K. F. Tong are with the School of Science and Technology, Hong Kong Metropolitan University, Hong Kong SAR, China. (E-mail: $\rm \{byliu,ktong\}@hkmu.edu.hk$).  M. Elkashlan is with the School of Electronic Engineering and Computer Science at Queen Mary University of London, London E1 4NS, U.K. (E-mail: $\rm maged.elkashlan@qmul.ac.uk$).   H. C. So is with the Department of Electrical  Engineering, City University of Hong Kong, Hong Kong (E-mail: $\rm  hcso@ee.cityu.edu.hk$). M. C. Valenti is with the Lane Department of Computer Science and Electrical Engineering, West Virginia University, Morgantown, USA (E-mail: $\rm valenti@ieee.org$). F. Adachi is with the International Research Institute of Disaster Science (IRIDeS), Tohoku University, Sendai, Japan (E-mail: $\rm adachi@ecei.tohoku.ac.jp$). K.-M. Luk is with the State Key Laboratory of Terahertz and Millimeter Waves, Department of Electronic Engineering, City University of Hong Kong, Hong Kong. (E-mail: $\rm eekmluk@ee.cityu.edu.hk$).  }}

\markboth{IEEE Transactions on ~Vol.~XX, No.~XX, XX~2026}%
{Wu \MakeLowercase{\textit{et al.}}: Fundamental Analysis of Scalable Fluid Antenna Systems}
\maketitle

\begin{abstract}
Unlike fixed-position arrays whose observation entropy budget is static, the scalable fluid antenna system (S-FAS) can dynamically scale its aperture, creating distinct observation spaces with configuration-dependent entropy budgets. This unique reconfigurability demands an information-theoretic foundation that goes beyond classical algebraic identifiability analysis. This paper develops an \textit{observation entropy framework} for S-FAS that provides a unified basis for deriving identifiability limits, diagnosing processing bottlenecks, and guiding system design. Consider an S-FAS with $M$ antennas, where $K$ narrowband sources impinge on the array and mutual coupling of order $p$ is mitigated via central subarray selection. By establishing that each configuration's identifiability is governed by its observation entropy budget $H(\mathbf{Y}_\alpha) \leq M_{\text{eff},\alpha} \log(2\pi e \sigma_y^2)$, where $H(\cdot)$ denotes differential entropy, $\mathbf{Y}_\alpha$ is the observation matrix for configuration $\alpha \in \{c, e\}$, with $c$ and $e$ denoting compressed and extended, respectively, $M_{\text{eff},\alpha}$ is the effective observation dimension, and $\sigma_y^2$ is the observation variance, we derive a complete capacity hierarchy: the compressed configuration supports $K_{\max,c} = M - 2p - 1$ sources, the extended configuration supports $K_{\max,e} = M - 1$ sources regardless of whether far-field or mixed-field parameters are estimated, and joint spatial stacking of both configurations yields the in-principle bound $K_{\max}^{\text{joint}} = (M_c + M_e) - 1$, where $M_c$ and $M_e$ are the effective dimensions of each configuration. Crucially, the entropy framework reveals insights inaccessible to algebraic methods: the data processing inequality explains \textit{why} sequential two-stage processing creates an information bottleneck, limiting the sequential capacity to $K_{\text{seq}} = K_{\max,c}$, and the noise entropy ratio provides a diagnostic tool to distinguish fundamental degrees-of-freedom exhaustion from algorithmic suboptimality. The proposed joint multiple signal classification (J-MUSIC) algorithm exploits augmented steering vectors to approach the joint capacity bound. Comprehensive Monte Carlo simulations with dual validation—algebraic (noise subspace dimension) and information-theoretic (noise entropy ratio)—confirm the predicted boundary behavior and capacity hierarchy across all configurations.
\end{abstract}

\begin{IEEEkeywords}
Scalable fluid antenna systems, source identifiability, information theory, degrees-of-freedom, mutual information, observation entropy, joint processing, configuration diversity
\end{IEEEkeywords}

\section{Introduction}

\IEEEPARstart{T}{he} sixth-generation wireless networks are expected to support centimeter-level localization accuracy for emerging applications such as autonomous driving, industrial Internet of Things (IoT), and augmented reality \cite{FAS_6G}. Direction-of-arrival (DOA) estimation plays a crucial role in these applications, serving as the foundation for acquiring channel state information (CSI) and performing effective downlink beamforming \cite{DOA_CSI,DOA_positioning}, while also enhancing target detection and tracking capabilities in radar and sonar systems \cite{DOA_radar}. To date, a plethora of excellent DOA estimation methods have been proposed, including subspace-based approaches like multiple signal classification (MUSIC) \cite{music} and estimation of signal parameters via rotational invariance techniques (ESPRIT) \cite{esprit}, sparse signal reconstruction methods such as sparse Bayesian learning (SBL) \cite{sbl}, and deep learning-based solutions \cite{DL_DOA,DL_DOA2}.

However, these established methods are fundamentally built upon fixed-position arrays (FPAs) with inter-element spacing $d$ typically no larger than half the carrier wavelength $\lambda$. Such a rigid architecture inherently suffers from two major limitations. First, it suffers from significant mutual coupling, which severely degrades estimation performance \cite{mutual_coupling}. Second, the static steering vector of an FPA corresponds to a fixed number of degrees-of-freedom (DoFs), restricting the ability to achieve super-resolution and resolve a large number of sources \cite{nested_array}. While massive multiple-input multiple-output (MIMO) arrays can partially address these issues, their high hardware costs and power consumption are often prohibitive \cite{massive_MIMO}.

From a broader information-theoretic viewpoint, DoF is a fundamental limiting resource that governs how reliability and resolvability trade with the observation dimension in multi-antenna systems \cite{zheng_tse_dmt}.

As a promising alternative, the fluid antenna system (FAS) has emerged in recent years to overcome the limitations of FPA systems \cite{FAS_wong,FAS_limits,FAS_tutorial}. In a FAS, the position of each radiating element can be dynamically reconfigured across a spatial aperture, enabling flexible and adaptive control over the antenna's spatial behavior. This reconfigurability can be achieved through various means, such as electronically switchable pixel arrays, metasurfaces, or other tunable structures \cite{FAS_pixel,FAS_MIMO}. By allowing the effective radiation point to change in response to the environment or communication needs, FAS unlocks additional spatial DoFs, offering new opportunities for enhancing performance in next-generation wireless systems. Motivated by these advantages, extensive research has explored FAS-enabled schemes, including fluid antenna multiple access (FAMA) \cite{FAMA,FAMA2}, channel estimation \cite{FAS_channel,FAS_channel2}, beamformer design \cite{FAS_beamforming}, and integrated sensing and communications (ISAC) \cite{FAS_ISAC,FAS_ISAC2}.

Despite the demonstrated versatility of FAS, its potential for enhancing direction-finding capabilities remains largely unexplored. The unique characteristics of FAS offer a key advantage for DOA estimation: the dynamic movement of the antenna can construct a larger virtual array, significantly increasing spatial DoFs; this not only enhances estimation accuracy but also enables underdetermined DOA estimation, where the number of detectable sources exceeds the number of physical antennas. Furthermore, the flexible antenna placement allows for adaptive array configurations optimized for different scenarios, providing superior spatial resolution.

To exploit these advantages, the scalable fluid antenna system (S-FAS) was recently proposed as a new paradigm for array signal processing \cite{scalable_FAS_paper}. Unlike conventional FAS, S-FAS is specifically designed for source localization through dynamic aperture scaling. The core innovation lies in its ability to dynamically \textit{scale} its physical aperture through a software-controlled mechanism, switching between two complementary configurations: a \textit{compressed configuration} with sub-wavelength spacing that eliminates grating lobe ambiguities for robust initial DOA estimation, and an \textit{extended configuration} with half-wavelength spacing that provides enhanced spatial resolution for precise joint angle-range refinement. This two-stage framework achieves high-precision localization across all field regimes—near-field, Fresnel, and far-field—without requiring \textit{a priori} field classification \cite{scalable_FAS_paper,mixed_field}.

Despite the demonstrated empirical success of S-FAS, its fundamental theoretical limits remain unexplored. Critically, characterizing these limits for a \textit{reconfigurable} multi-configuration system like S-FAS requires going beyond classical algebraic identifiability analysis (i.e., rank counting of steering matrices). The reason is that S-FAS introduces a new phenomenon to array signal processing: \textit{dynamic observation entropy scaling}. Let $H(\cdot)$ denote the differential entropy, $\mathbf{Y}$ the observation matrix, and $M_{\text{eff}}$ the effective number of array elements after physical constraints are applied. In a conventional FPA, the observation entropy budget $H(\mathbf{Y}) \leq M_{\text{eff}} \log(2\pi e \sigma_y^2)$ (where $\sigma_y^2$ is the observation variance) is fixed by the static array geometry, so algebraic rank counting suffices to determine identifiability. In S-FAS, however, the reconfigurable aperture creates distinct observation spaces with different entropy budgets, namely, $H(\mathbf{Y}_c)$ for the compressed configuration and $H(\mathbf{Y}_e)$ for the extended configuration, which gives rise to cross-configuration information flow that no single-configuration algebraic analysis can capture.

An information-theoretic framework is therefore indispensable for S-FAS, providing three capabilities unavailable from algebraic methods alone: (i) a \textit{unified entropy metric} for fair identifiability comparison across configurations with fundamentally different geometries, coupling characteristics, and element counts; (ii) a \textit{bottleneck diagnosis} mechanism via the data processing inequality, explaining \textit{why} sequential two-stage processing wastes the extended array's capacity—the refined estimates cannot convey more information than that contained in the original compressed observations, i.e., the mutual information satisfies $I(\boldsymbol{\theta}; \hat{\boldsymbol{\theta}}_c) \leq H(\mathbf{Y}_c)$ where $\boldsymbol{\theta}$ contains the source parameters and $\hat{\boldsymbol{\theta}}_c$ the compressed-stage estimates, regardless of the larger aperture in the second stage; and (iii) a \textit{fundamental-versus-algorithmic distinction} that enables S-FAS designers to determine whether performance degradation reflects exhaustion of observational DoFs (indicated by noise entropy collapse as the number of sources approaches $M_{\text{eff}}$) or suboptimal estimation (indicated by residual entropy that the estimator fails to exploit)—a critical requirement for practical system deployment.

Such information--estimation connections have been studied extensively in the information theory literature, including classical relationships linking mutual information and estimation error in Gaussian models \cite{guo_shamai_verdu_imse,palomar_verdu_gradient,verdu_mismatched}.

Complementarily, non-asymptotic information theory and information-spectrum methods characterize how finite data length can create unavoidable performance gaps relative to asymptotic limits, providing a principled lens for understanding practical saturation phenomena \cite{polyanskiy_poor_verdu_fbl,verdu_han_capacity,han_verdu_output_stats}.

Despite the rich literature on array signal processing \cite{music,esprit,sbl} and tensor methods \cite{kruskal,sidiropoulos,parafac,comon,haardt_nossek}, no prior work has addressed identifiability for reconfigurable multi-configuration systems like S-FAS. Classical identifiability results for uniform linear arrays (ULAs) \cite{stoica_nehorai} establish the well-known bound that the number of identifiable sources must be strictly less than the number of array elements for far-field DOA estimation, but these results assume fixed geometry and do not account for configuration-dependent phenomena such as mutual coupling compensation, central subarray selection, or cross-configuration information flow.

To bridge this gap, this paper develops an \textit{observation entropy framework} for S-FAS and derives the complete identifiability hierarchy from this unified foundation. Let $M$ denote the total number of antenna elements, $K$ the number of impinging sources, and $p$ the mutual coupling order mitigated by central subarray selection (removing $p$ edge elements from each end). Let $K_{\max,c}$ and $K_{\max,e}$ denote the maximum identifiable source counts for the compressed and extended configurations, respectively, and let $M_c = M - 2p$ and $M_e = M$ be their effective observation dimensions. The main contributions are:

\begin{itemize}
\item \textbf{\textit{Observation Entropy Framework:}} We introduce the concept of \textit{observation entropy budget} for reconfigurable arrays: each S-FAS configuration possesses an entropy budget $H(\mathbf{Y}_\alpha) \leq M_{\text{eff},\alpha} \log(2\pi e \sigma_y^2)$ that scales with its effective observation dimension $M_{\text{eff},\alpha}$. The fundamental identifiability constraint, expressed via the mutual information $I(\boldsymbol{\theta}; \mathbf{Y}) \leq H(\mathbf{Y})$, requires the noise subspace to retain at least one dimension ($M_{\text{eff}} - K \geq 1$) to serve as the statistical reference for parameter discrimination. All subsequent results are derived as specializations of this framework to specific S-FAS configurations.

\item \textbf{\textit{Compressed Configuration Bound:}} Applying the entropy framework with $M_{\text{eff},c} = M - 2p$ (after central subarray selection for coupling mitigation), we derive $K_{\max,c} = M - 2p - 1$. The information-theoretic proof reveals that edge removal reduces the entropy budget $H(\mathbf{Y}_c)$ while enabling grating-lobe-free initialization—a principled trade-off between entropy loss and spatial unambiguity.

\item \textbf{\textit{Extended Configuration Bound:}} Under the entropy framework with $M_{\text{eff},e} = M$, we prove $K_{\max,e} = M - 1$ regardless of whether far-field or mixed-field parameters are estimated. The entropy perspective clarifies \textit{why} this holds: the constraint arises from the observational DoFs (governing $H(\mathbf{Y})$), not from the number of parameters per source.

\item \textbf{\textit{Sequential Bottleneck Diagnosis:}} Using the data processing inequality within the entropy framework, we prove that sequential capacity $K_{\text{seq}} = \min(K_{\max,c}, K_{\max,e})$ is limited by the compressed-stage entropy budget. This diagnosis is \textit{uniquely} enabled by the information-theoretic perspective—algebraic analysis can detect the bottleneck but cannot explain its mechanism.

\item \textbf{\textit{Entropy-Expanding Joint Processing:}} We propose spatial stacking as an entropy expansion strategy: combining configurations yields $H(\mathbf{Y}_{\text{joint}}) \propto (M_c + M_e)$, breaking through the sequential bottleneck to achieve the in-principle bound $K_{\max}^{\text{joint}} = (M_c + M_e) - 1$. Practical saturation effects arising from manifold conditioning are analyzed in Remark~\ref{rem:theory_practice_gap}.

\item \textbf{\textit{J-MUSIC Algorithm:}} We develop a practical algorithm exploiting augmented steering vectors from both configurations, with complexity scaling as $\mathcal{O}((M_c+M_e)^3)$.

\item \textbf{\textit{Dual Validation:}} For every theoretical bound, we provide both algebraic validation (noise subspace dimension) and information-theoretic validation (noise entropy ratio), confirming that the entropy framework correctly predicts the identifiability hierarchy across all configurations.
\end{itemize}

\section{System Model and Preliminaries}

\subsection{S-FAS Configuration Parameters}

Consider an S-FAS with $M$ antennas whose inter-element spacing is controlled by a scaling factor $\alpha$:
\begin{equation}\label{eq:scaling_spacing}
d(\alpha) = \alpha \cdot d_0, \quad p_m(\alpha) = (m-1)\alpha d_0, \quad D(\alpha) = (M-1)\alpha d_0
\end{equation}
where $d_0 = \lambda/2$ is the baseline spacing, $p_m(\alpha)$ is the $m$-th element position, and $D(\alpha)$ is the array aperture. We focus on dual-configuration operation with $\alpha \in \{\alpha_c, \alpha_e\}$.

The \textit{compressed configuration} ($\alpha_c = 0.1$) yields spacing $d_c = 0.05\lambda$ and aperture $D_c = (M-1)\alpha_c d_0$. This sub-wavelength spacing eliminates grating lobes but introduces severe mutual coupling modeled by Toeplitz matrix $\mathbf{C}_c \in \mathbb{C}^{M \times M}$. Central subarray selection removes $p$ edge elements from each end, giving effective DoF $M_{\text{eff},c} = M - 2p$.

The \textit{extended configuration} ($\alpha_e = 1.0$) has spacing $d_e = 0.5\lambda$ and aperture $D_e = (M-1)\alpha_e d_0$. Mutual coupling is negligible, yielding $M_{\text{eff},e} = M$ effective elements. The aperture gain $D_e/D_c = 10$ provides enhanced resolution.

Throughout this paper, we adopt $M = 32$ and $p = 3$, yielding $M_{\text{eff},c} = 26$ and $M_{\text{eff},e} = 32$.

\subsection{Signal Model}

Consider $K$ narrowband sources at DOAs $\boldsymbol{\theta} = [\theta_1, \ldots, \theta_K]^T$ and ranges $\mathbf{r} = [r_1, \ldots, r_K]^T$ with uncorrelated signals $\mathbf{s}(t) \in \mathbb{C}^{K}$. The received signal at snapshot $t$ in configuration $\alpha \in \{\alpha_c, \alpha_e\}$ is
\begin{equation}\label{eq:general_signal_model}
\mathbf{y}_\alpha(t) = \mathbf{A}_\alpha(\boldsymbol{\theta}, \mathbf{r})\mathbf{s}(t) + \mathbf{n}_\alpha(t)
\end{equation}
where $\mathbf{A}_\alpha \in \mathbb{C}^{M_{\text{eff},\alpha} \times K}$ is the array manifold and $\mathbf{n}_\alpha(t) \sim \mathcal{CN}(\mathbf{0}, \sigma_n^2\mathbf{I})$.

\subsubsection{Exact Spatial Geometry (ESG)}

For source $k$ at $(\theta_k, r_k)$, the exact distance to element $m$ is
\begin{equation}\label{eq:exact_distance}
r_{m,k}(\alpha) = \sqrt{r_k^2 + p_m^2(\alpha) - 2r_k p_m(\alpha) \sin\theta_k}
\end{equation}
giving the ESG steering vector
\begin{equation}\label{eq:esg_steering}
[\mathbf{a}_k^{\text{ESG}}(\theta_k, r_k, \alpha)]_m = \frac{1}{\sqrt{M}} \frac{r_k}{r_{m,k}(\alpha)} e^{j(2\pi/\lambda)[r_{m,k}(\alpha) - r_k]}
\end{equation}
which is valid for all field regions.

\subsubsection{Compressed Configuration}

Mutual coupling is modeled by Toeplitz matrix $\mathbf{C}_c$. For far-field sources ($r_k \gg 2D_c^2/\lambda$), the steering matrix is
\begin{equation}\label{eq:ff_steering_compressed}
[\mathbf{A}_c^{\text{FF}}(\boldsymbol{\theta})]_{m,k} = e^{j\frac{2\pi}{\lambda}(m-1)d_c\sin\theta_k}
\end{equation}
with Vandermonde structure. Central subarray selection via the selection matrix
\begin{equation}\label{eq:selection_matrix}
\mathbf{F} = [\mathbf{0}_{(M-2p) \times p}, \mathbf{I}_{M-2p}, \mathbf{0}_{(M-2p) \times p}]
\end{equation}
yields the effective manifold
\begin{equation}\label{eq:effective_compressed_manifold}
\mathbf{A}_c(\boldsymbol{\theta}) = \mathbf{F}\mathbf{C}_c \mathbf{A}_c^{\text{FF}}(\boldsymbol{\theta}) \in \mathbb{C}^{(M-2p) \times K}.
\end{equation}
The signal model is then given by $\mathbf{y}_c(t) = \mathbf{A}_c(\boldsymbol{\theta})\mathbf{s}(t) + \mathbf{n}_c(t)$.

\subsubsection{Extended Configuration}

With spacing $d_e = 0.5\lambda$, mutual coupling is negligible. For far-field sources,
\begin{equation}\label{eq:ff_steering_extended}
[\mathbf{A}_e^{\text{FF}}(\boldsymbol{\theta})]_{m,k} = e^{j\frac{2\pi}{\lambda}(m-1)d_e\sin\theta_k}, \quad m = 1,\ldots,M
\end{equation}
supports DOA-only estimation. For mixed-field sources, the ESG model
\begin{equation}\label{eq:esg_steering_extended}
[\mathbf{A}_e^{\text{ESG}}(\boldsymbol{\theta}, \mathbf{r})]_{m,k} = \frac{1}{\sqrt{M}} \frac{r_k}{r_{m,k}(\alpha_e)} e^{j(2\pi/\lambda)[r_{m,k}(\alpha_e) - r_k]}
\end{equation}
enables joint angle-range estimation. The signal model is $\mathbf{y}_e(t) = \mathbf{A}_e(\boldsymbol{\theta}, \mathbf{r})\mathbf{s}(t) + \mathbf{n}_e(t)$.

\subsection{Observation Entropy Framework}
\label{sec:entropy_framework}

Having established the configuration-dependent signal models, we now develop the \textit{observation entropy framework} that serves as the unified theoretical foundation for all identifiability results in this paper. The key insight is that each S-FAS configuration possesses a configuration-dependent \textit{entropy budget} that governs its identifiability capacity. Unlike classical algebraic identifiability analysis, which treats each configuration in isolation via rank analysis, the entropy framework provides a unified metric for cross-configuration comparison, bottleneck diagnosis, and fundamental-versus-algorithmic distinction.

The framework rests on three Pillars, formalized as Lemma~\ref{lem:subspace_decomposition}, Proposition~\ref{prop:fundamental_constraint}, and their information-theoretic proof:
\begin{enumerate}
\item \textbf{Entropy budget:} Each configuration $\alpha$ with effective dimension $M_{\text{eff},\alpha}$ has observation entropy bounded by $H(\mathbf{Y}_\alpha) \leq M_{\text{eff},\alpha} \cdot \log(2\pi e \sigma_y^2)$, and identifiability requires $I(\boldsymbol{\theta}; \mathbf{Y}_\alpha) \leq H(\mathbf{Y}_\alpha)$.
\item \textbf{Noise reference requirement:} The noise subspace must retain at least one dimension ($M_{\text{eff}} - K \geq 1$) to provide the statistical reference for parameter discrimination, yielding the universal bound $K_{\max} = M_{\text{eff}} - 1$.
\item \textbf{Entropy hierarchy:} When different configurations are available, their entropy budgets can be compared ($H(\mathbf{Y}_c)$ vs.\ $H(\mathbf{Y}_e)$), cascaded (sequential processing, governed by the data processing inequality), or combined (joint processing, yielding $H(\mathbf{Y}_{\text{joint}}) \propto M_c + M_e$).
\end{enumerate}
The remainder of this subsection formalizes Pillars 1 and 2, while Pillar 3 is developed in Sections~\ref{sec:sfas_capacity}--\ref{sec:joint_capacity}. Table~\ref{tab:framework_roadmap} summarizes how each subsequent section specializes the framework.

\begin{table}[!t]
\centering
\caption{Observation Entropy Framework: Specializes Roadmap}
\label{tab:framework_roadmap}
\renewcommand{\arraystretch}{1.2}
\begin{tabular}{l c c}
\hline
\textbf{Section} & $M_{\text{eff}}$ & \textbf{Framework Mechanism} \\
\hline
III (Compressed) & $M - 2p$ & Entropy budget reduction \\
IV (Extended) & $M$ & Full entropy budget \\
V-A (Sequential) & $\min(M_c, M_e)$ & Data processing inequality \\
V-B (Joint) & $M_c + M_e$ & Entropy expansion \\
\hline
\end{tabular}
\end{table}

For an array configuration with $M_{\text{eff}}$ effective spatial observations (after coupling mitigation and edge removal), the array collects $N$ temporal snapshots $\{\mathbf{y}(t)\}_{t=1}^N$. The primary statistical quantity for subspace-based estimation is the sample covariance matrix
\begin{equation}\label{eq:sample_covariance}
\hat{\mathbf{R}} = \frac{1}{N}\sum_{t=1}^N \mathbf{y}(t)\mathbf{y}^H(t) \in \mathbb{C}^{M_{\text{eff}} \times M_{\text{eff}}}
\end{equation}
which, under the assumption of ergodic source signals and sufficiently large $N$, converges to the theoretical covariance matrix
\begin{equation}\label{eq:theoretical_covariance}
\mathbf{R} = E[\mathbf{y}(t)\mathbf{y}^H(t)] = \mathbf{A}\mathbf{R}_s\mathbf{A}^H + \sigma_n^2\mathbf{I}_{M_{\text{eff}}}
\end{equation}
where $\mathbf{R}_s = E[\mathbf{s}(t)\mathbf{s}^H(t)]$ is the source covariance matrix and we have suppressed configuration subscripts for notational clarity. For uncorrelated sources with powers $P_1, \ldots, P_K$, the source covariance is diagonal: $\mathbf{R}_s = \text{diag}(P_1, \ldots, P_K)$.

\begin{definition} 
The spatial DoF of an array configuration is defined as the rank of the covariance matrix observation space, which equals the effective number of array elements $M_{\text{eff}}$ after all physical constraints (mutual coupling mitigation, edge removal) are applied.
\end{definition}

The DoF concept is intimately connected to the eigenstructure of the covariance matrix. Under the signal model \eqref{eq:general_signal_model} with $K$ uncorrelated sources and full-rank array manifold $\mathbf{A}$, the covariance matrix $\mathbf{R}$ admits an eigenvalue decomposition (EVD) that partitions the $M_{\text{eff}}$-dimensional observation space into orthogonal signal and noise subspaces. To establish this rigorously, we prove the following foundational result:

\begin{lemma} 
\label{lem:subspace_decomposition}
If the array manifold $\mathbf{A} \in \mathbb{C}^{M_{\text{eff}} \times K}$ has full column rank (i.e., $\text{rank}(\mathbf{A}) = K$) and $K < M_{\text{eff}}$, then the covariance matrix $\mathbf{R}$ has exactly $K$ eigenvalues larger than $\sigma_n^2$ and $M_{\text{eff}} - K$ eigenvalues equal to $\sigma_n^2$. The corresponding eigenvectors span orthogonal signal and noise subspaces of dimensions $K$ and $M_{\text{eff}} - K$, respectively.
\end{lemma}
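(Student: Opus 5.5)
The plan is to reduce the statement to a rank argument on the noise-free part $\mathbf{R}_0 = \mathbf{A}\mathbf{R}_s\mathbf{A}^H$ of the covariance in \eqref{eq:theoretical_covariance}. Adding $\sigma_n^2\mathbf{I}_{M_{\text{eff}}}$ only shifts every eigenvalue by $\sigma_n^2$ and leaves the eigenvectors unchanged. So it is enough to show that $\mathbf{R}_0$ is Hermitian positive semidefinite with rank exactly $K$: then its eigenvalues are $K$ strictly positive values $\lambda_1\ge\cdots\ge\lambda_K>0$ together with $M_{\text{eff}}-K$ zeros.

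\textbf{Step 1 (structure).} $\mathbf{R}_0$ is Hermitian because $\mathbf{R}_s$ is Hermitian. It is positive semidefinite because
\[
\mathbf{x}^H\mathbf{R}_0\mathbf{x} = \sum_{k=1}^K P_k\,|\mathbf{a}_k^H\mathbf{x}|^2 \ge 0 .
\]
Hence $\mathbf{R}_0$ is unitarily diagonalizable with nonnegative real eigenvalues.

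\textbf{Step 2 (rank).} Write $\mathbf{R}_0 = \mathbf{B}\mathbf{B}^H$ with $\mathbf{B} = \mathbf{A}\mathbf{R}_s^{1/2}$. Since each $P_k>0$, $\mathbf{R}_s^{1/2}$ is invertible, so $\text{rank}(\mathbf{B}) = \text{rank}(\mathbf{A}) = K$ by the full-column-rank hypothesis. The null spaces satisfy $\mathcal{N}(\mathbf{B}\mathbf{B}^H) = \mathcal{N}(\mathbf{B}^H)$, because $\mathbf{B}\mathbf{B}^H\mathbf{x}=\mathbf{0}$ implies $\|\mathbf{B}^H\mathbf{x}\|^2=0$. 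It follows that $\text{rank}(\mathbf{R}_0) = K$ and $\mathcal{N}(\mathbf{R}_0) = \mathcal{N}(\mathbf{A}^H) = \mathcal{R}(\mathbf{A})^\perp$, which has dimension $M_{\text{eff}}-K \ge 1$ since $K<M_{\text{eff}}$. This step relies on the positivity of the source powers $P_k$. That assumption is implicit in the model, and I would state it explicitly; it is the only point where anything could fail.

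\textbf{Step 3 (eigenvalues and subspaces).} Apply the spectral theorem to $\mathbf{R}_0$. Its positive eigenvalues have eigenvectors $\mathbf{u}_1,\dots,\mathbf{u}_K$ spanning $\mathcal{R}(\mathbf{R}_0) = \mathcal{R}(\mathbf{A})$. Its zero eigenvalue has an orthonormal basis $\mathbf{u}_{K+1},\dots,\mathbf{u}_{M_{\text{eff}}}$ of $\mathcal{R}(\mathbf{A})^\perp$. Then $\mathbf{R} = \mathbf{U}\,\text{diag}(\lambda_1+\sigma_n^2,\dots,\lambda_K+\sigma_n^2,\sigma_n^2,\dots,\sigma_n^2)\,\mathbf{U}^H$. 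This gives exactly $K$ eigenvalues strictly larger than $\sigma_n^2$ and $M_{\text{eff}}-K$ eigenvalues equal to $\sigma_n^2$.

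Define $\mathbf{U}_s = [\mathbf{u}_1,\dots,\mathbf{u}_K]$ and $\mathbf{U}_n = [\mathbf{u}_{K+1},\dots,\mathbf{u}_{M_{\text{eff}}}]$. Their column spaces are orthogonal complements of dimensions $K$ and $M_{\text{eff}}-K$. I would also record $\mathbf{U}_n^H\mathbf{A}=\mathbf{0}$ as a by-product, since the later MUSIC-based arguments use it.

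One subtlety remains. When $\lambda_K+\sigma_n^2$ is distinct from $\sigma_n^2$, which holds because $\lambda_K>0$, the noise eigenspace is uniquely determined even though its basis is not. I would note that the two subspaces are canonical, not merely the particular eigenvectors chosen.
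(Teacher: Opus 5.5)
Your proposal is correct and follows the same route as the paper. Both arguments reduce the claim to $\mathbf{A}\mathbf{R}_s\mathbf{A}^H$ being positive semidefinite of rank $K$, shift its spectrum by $\sigma_n^2$, and finish with the orthogonality relations and $\mathbf{A}^H\mathbf{U}_n=\mathbf{0}$; yours is more careful, since you justify the rank via $\mathcal{N}(\mathbf{B}\mathbf{B}^H)=\mathcal{N}(\mathbf{B}^H)$ and observe that adding $\sigma_n^2\mathbf{I}$ is an exact eigenvector-preserving shift, whereas the paper simply asserts the rank and appeals to an ``eigenvalue perturbation theorem.''
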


\begin{proof}
The covariance matrix can be rewritten as
\begin{equation}
\mathbf{R} = \mathbf{A}\mathbf{R}_s\mathbf{A}^H + \sigma_n^2\mathbf{I}.
\end{equation}
Since $\mathbf{R}_s$ is positive definite (all source powers $P_k > 0$) and $\mathbf{A}$ has full column rank, the matrix $\mathbf{A}\mathbf{R}_s\mathbf{A}^H$ is positive semidefinite with rank equal to $\text{rank}(\mathbf{A}) = K$. By the eigenvalue perturbation theorem, the eigenvalues of $\mathbf{R}$ consist of $K$ eigenvalues of the form $\sigma_n^2 + \lambda_k$ where $\lambda_k > 0$ are the eigenvalues of $\mathbf{A}\mathbf{R}_s\mathbf{A}^H$, and $M_{\text{eff}} - K$ eigenvalues equal to $\sigma_n^2$. 

Let $\mathbf{U}_s = [\mathbf{u}_1, \ldots, \mathbf{u}_K]$ denote the eigenvectors corresponding to the $K$ largest eigenvalues (signal subspace), and $\mathbf{U}_n = [\mathbf{u}_{K+1}, \ldots, \mathbf{u}_{M_{\text{eff}}}]$ denote the eigenvectors corresponding to the noise eigenvalues (noise subspace). These satisfy the orthogonality relations:
\begin{equation}
\mathbf{U}_s^H\mathbf{U}_n = \mathbf{0}, \quad \mathbf{U}_s^H\mathbf{U}_s = \mathbf{I}_K, \quad \mathbf{U}_n^H\mathbf{U}_n = \mathbf{I}_{M_{\text{eff}}-K}.
\end{equation}
Furthermore, the noise subspace is orthogonal to the array manifold: $\mathbf{A}^H\mathbf{U}_n = \mathbf{0}$, which is the fundamental orthogonality property exploited by MUSIC.
\end{proof}

The practical implication of Lemma~\ref{lem:subspace_decomposition} is that subspace-based methods require at least one dimension in the noise subspace for source parameter estimation. This imposes the fundamental identifiability constraint:

\begin{proposition} 
\label{prop:fundamental_constraint}
For an array configuration with spatial DoFs being $M_{\text{eff}}$, the maximum number of uniquely identifiable sources using subspace methods is
\begin{equation}\label{eq:fundamental_constraint}
K_{\max} = M_{\text{eff}} - 1.
\end{equation}
This bound is tight when the array manifold has full column rank for all parameter combinations.
\end{proposition}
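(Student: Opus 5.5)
The proof has two parts: an achievability part showing that $K = M_{\text{eff}}-1$ sources can be uniquely identified, and a converse showing that $K \geq M_{\text{eff}}$ sources cannot be identified by any subspace method.

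\textbf{Achievability.} Fix any $K \leq M_{\text{eff}}-1$ and assume the manifold $\mathbf{A}$ has full column rank for all parameter combinations, as in the tightness hypothesis. Lemma~\ref{lem:subspace_decomposition} then gives a noise subspace $\mathbf{U}_n$ of dimension $M_{\text{eff}}-K \geq 1$ with $\mathbf{A}^H\mathbf{U}_n=\mathbf{0}$. So every true steering vector is a zero of the null spectrum $\|\mathbf{U}_n^H\mathbf{a}(\vartheta)\|^2$, and it remains to show there are no spurious zeros. Suppose some parameter $\vartheta'$ outside the true set also satisfies $\mathbf{U}_n^H\mathbf{a}(\vartheta')=\mathbf{0}$. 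Then $\mathbf{a}(\vartheta')$ lies in $\mathrm{span}(\mathbf{U}_s)=\mathrm{range}(\mathbf{A})$. Hence the augmented matrix $[\mathbf{A},\mathbf{a}(\vartheta')]$ of $K+1 \leq M_{\text{eff}}$ distinct steering vectors is rank deficient. This contradicts the assumed full column rank for all parameter combinations of size up to $M_{\text{eff}}$. For the Vandermonde far-field manifolds \eqref{eq:ff_steering_compressed} and \eqref{eq:ff_steering_extended}, this full-rank property holds automatically whenever the spatial frequencies are distinct, since the spacing satisfies $d\le\lambda/2$ (no aliasing). Therefore the zeros of the null spectrum are exactly the true parameters, and $K=M_{\text{eff}}-1$ is achievable.

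\textbf{Converse.} Take $K \geq M_{\text{eff}}$. Then $\mathrm{rank}(\mathbf{A}\mathbf{R}_s\mathbf{A}^H)=\min(K,M_{\text{eff}})=M_{\text{eff}}$, so all eigenvalues of $\mathbf{R}$ exceed $\sigma_n^2$ and the noise subspace is empty. The range of $\mathbf{A}$ is then all of $\mathbb{C}^{M_{\text{eff}}}$, so every candidate $\mathbf{a}(\vartheta)$ lies in the signal subspace. The orthogonality criterion therefore cannot separate true from false parameters: the null spectrum is identically degenerate. Equivalently, any $M_{\text{eff}}+1$ steering vectors in $\mathbb{C}^{M_{\text{eff}}}$ are linearly dependent, so the range space does not identify the parameter set.

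\textbf{Interpretation and main obstacle.} To connect with Pillar~1, the surviving noise eigenvalue $\sigma_n^2$ can be read as the reference entropy term $\log(\pi e\sigma_n^2)$ in $H(\mathbf{Y})$. When this term disappears, the noise and signal contributions can no longer be distinguished within the budget $M_{\text{eff}}\log(2\pi e\sigma_y^2)$. The main obstacle is making the achievability step rigorous for general, non-Vandermonde manifolds such as the ESG model. There, full rank of every $(K+1)$-subset of steering vectors is not automatic. I would handle this by stating it explicitly as the \emph{uniqueness condition} embedded in the proposition's tightness clause. I would then verify it for the Vandermonde cases via the Vandermonde determinant.
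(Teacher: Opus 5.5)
Your proposal is correct, but it takes a different route from the paper for the achievability half, and it is the more rigorous of the two.

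\textbf{Converse.} This half is essentially the paper's algebraic argument: a nontrivial noise subspace requires $M_{\text{eff}}-K\geq 1$.

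\textbf{Achievability.} The paper only asserts that a one-dimensional noise subspace ``is still sufficient for MUSIC spectral search'' and backs this with simulation. Its second proof splits the eigenvalue entropy as $H_{\text{signal}}(K)+(M_{\text{eff}}-K)\log(\sigma_n^2)$ and demands a non-degenerate noise term. You instead give the classical no-spurious-nulls argument: an extra zero of $\|\mathbf{U}_n^H\mathbf{a}(\vartheta)\|^2$ would place $K+1\le M_{\text{eff}}$ distinct steering vectors in the $K$-dimensional range of $\mathbf{A}$. This gains two things the paper lacks:
\begin{enumerate}
\item It pins down what the tightness hypothesis must mean: every set of at most $M_{\text{eff}}$ distinct steering vectors is linearly independent. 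Full column rank of the $K$-column $\mathbf{A}$ alone does not rule out ambiguous nulls.
\item It handles $K=M_{\text{eff}}$ directly. Lemma~\ref{lem:subspace_decomposition} does not cover that case, since it assumes $K<M_{\text{eff}}$.
\end{enumerate}
You are also right not to rest anything on the entropy reading. The paper's condition $(M_{\text{eff}}-K)\log(\sigma_n^2)>-\infty$ also holds at $K=M_{\text{eff}}$, where the term is simply zero. That part therefore supports the framework's noise-entropy-ratio interpretation but does not derive the bound.

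\textbf{Two small repairs.}
\begin{enumerate}
\item In the converse, $\mathrm{rank}(\mathbf{A})=M_{\text{eff}}$ for $K\ge M_{\text{eff}}$ follows from the same independence hypothesis. Cite it as such, since for a general manifold $\mathbf{A}$ could be rank-deficient.
\item The Vandermonde remark needs two qualifications.
\begin{itemize}
\item At $d=\lambda/2$ exactly, $\theta=\pm 90^\circ$ give identical steering vectors. Restrict to a half-open angular interval.
\item The effective compressed manifold $\mathbf{F}\mathbf{C}_c\mathbf{A}_c^{\text{FF}}$ is not Vandermonde, and $\mathbf{F}\mathbf{C}_c$ has a $2p$-dimensional null space, so independence is not automatic there. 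It does hold for banded Toeplitz coupling of bandwidth at most $p$. In that case $\mathbf{F}\mathbf{C}_c\mathbf{a}(\theta)$ is a scalar multiple of an $(M-2p)$-element Vandermonde vector, and independence follows provided that scalar is nonzero.
\end{itemize}
\end{enumerate}
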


\begin{proof}
We provide a dual proof from both algebraic and information-theoretic perspectives.

\textit{Algebraic proof:} From Lemma~\ref{lem:subspace_decomposition}, we require $M_{\text{eff}} - K \geq 1$ to ensure the existence of a non-trivial noise subspace. This yields $K \leq M_{\text{eff}} - 1$. The bound is achieved when $K = M_{\text{eff}} - 1$, which leaves a one-dimensional noise subspace that is still sufficient for MUSIC spectral search via the orthogonality condition $\mathbf{a}^H(\theta)\mathbf{U}_n = 0$.

To corroborate this algebraic proof, we conduct Monte Carlo simulations ($M_{\text{eff}} = 32$, signal-to-noise ratio (SNR) = 10~dB, $N = 500$) that directly measure the noise subspace dimensions $\dim(\mathcal{N})$ for varying $K \in \{2, 4, \ldots, 34\}$. For each $K$, we form the sample covariance matrix and compute the empirical noise subspace dimension using the true source count, i.e., $\dim(\mathcal{N}) = M_{\text{eff}} - K$, providing a purely geometric verification of the rank relation.

Fig.~\ref{fig:validate_fundamental} shows that the measured noise dimension perfectly coincides with the theoretical line $M_{\text{eff}} - K$ for all $K$, confirming that the covariance rank decomposition holds exactly under realistic SNR and snapshot conditions. The vertical lines mark critical transitions: at $K = 31$, the noise subspace becomes one-dimensional, while at $K = 32$ it vanishes completely, leaving no orthogonality reference $\mathbf{a}^H(\theta)\mathbf{U}_n = 0$ and thereby enforcing the identifiability limit $K_{\max} = M_{\text{eff}} - 1$.

\begin{figure}[!t]
	\centering
	\includegraphics[width=0.48\textwidth]{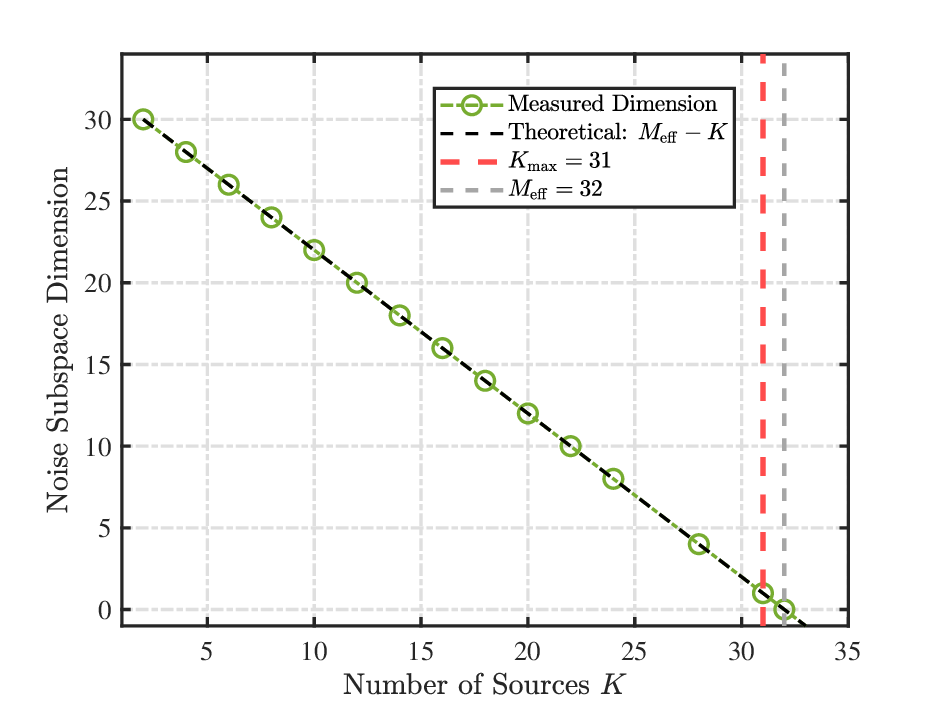}
	\caption{Fundamental constraint: Algebraic validation via $\dim(\mathbf{U}_n)$ versus number of Sources $K$, illustrating Subspace Dimension behavior.}
	\label{fig:validate_fundamental}
\end{figure}

While the algebraic proof establishes the constraint through geometric subspace decomposition, we now provide an independent information-theoretic justification to reveal the fundamental entropy bottleneck underlying this identifiability limit. This dual perspective is crucial: the algebraic view explains \textit{how} subspace methods fail (loss of orthogonality reference), while the information-theoretic view explains \textit{why} parameter estimation becomes impossible (degenerate noise entropy reference).

\textit{Information-theoretic proof:} Consider the mutual information between source parameters $\boldsymbol{\Theta} = \{\theta_1, \ldots, \theta_K\}$ and observations $\mathbf{Y} = \{\mathbf{y}(1), \ldots, \mathbf{y}(N)\}$. The mutual information quantifies the amount of information about source parameters extractable from observations:
\begin{equation}
I(\boldsymbol{\Theta}; \mathbf{Y}) = H(\mathbf{Y}) - H(\mathbf{Y}|\boldsymbol{\Theta})
\end{equation}
where $H(\cdot)$ denotes differential entropy. The observation entropy is upper-bounded by the dimensionality of the observation space:
\begin{equation}
H(\mathbf{Y}) \leq M_{\text{eff}} \cdot \log(2\pi e \sigma_y^2)
\end{equation}
where $\sigma_y^2$ is the total observation variance. Given the source parameters, the conditional entropy reduces to the noise entropy:
\begin{equation}
H(\mathbf{Y}|\boldsymbol{\Theta}) = M_{\text{eff}} \cdot \log(2\pi e \sigma_n^2).
\end{equation}
Thus, the mutual information becomes
\begin{equation}
I(\boldsymbol{\Theta}; \mathbf{Y}) = M_{\text{eff}} \cdot \log\left(1 + \frac{P_{\text{signal}}}{\sigma_n^2}\right) = M_{\text{eff}} \cdot \log(1 + \text{SNR}).
\end{equation}

However, subspace-based estimation fundamentally partitions the observation space into signal and noise subspaces of dimensions $K$ and $M_{\text{eff}} - K$, respectively. The eigenvalue entropy decomposition yields
\begin{align}
H(\boldsymbol{\Lambda}) &= \sum_{i=1}^K \log(\lambda_i) + \sum_{i=K+1}^{M_{\text{eff}}} \log(\sigma_n^2) \nonumber\\
&= H_{\text{signal}}(K) + (M_{\text{eff}} - K)\log(\sigma_n^2)
\end{align}
where $\boldsymbol{\Lambda} = \text{diag}(\lambda_1, \ldots, \lambda_{M_{\text{eff}}})$ contains the covariance eigenvalues. For the noise subspace to provide a non-degenerate reference for parameter estimation, we require
\begin{equation}
H_{\text{noise}} = (M_{\text{eff}} - K)\log(\sigma_n^2) > -\infty \quad \Longrightarrow \quad M_{\text{eff}} - K \geq 1.
\end{equation}
This information-theoretic constraint, requiring at least one dimension to quantify the noise baseline entropy, independently confirms $K \leq M_{\text{eff}} - 1$.
\end{proof}

\begin{remark}[Information-Theoretic Interpretation]
The constraint $K_{\max} = M_{\text{eff}} - 1$ reflects a fundamental information bottleneck: the observation space must allocate at least one dimension to characterize the noise statistics, which serve as the reference baseline for discriminating signal subspace components. When $K = M_{\text{eff}}$, the noise subspace vanishes, causing the noise entropy term to degenerate and eliminating the statistical reference required for parameter identifiability. This is analogous to the Nyquist sampling theorem requiring oversampling by a factor of two; here, the spatial domain requires one redundant dimension for reliable parameter extraction.
\end{remark}

To empirically validate this information-theoretic viewpoint, we perform Monte Carlo simulations with $M_{\text{eff}} = 32$, $N = 500$ snapshots, and different SNR levels ($0$, $10$, $15$~dB). For each source number $K$, we form the sample covariance matrix, compute its eigenvalues, and decompose the observation entropy into signal and noise contributions $H_{\text{signal}}(K)$ and $H_{\text{noise}}(K)$. The resulting noise entropy ratio $\rho_{\text{n}}(K) = |H_{\text{noise}}|/(|H_{\text{signal}}| + |H_{\text{noise}}|)$ quantifies the fraction of observation entropy allocated to the noise baseline.

Fig.~\ref{fig:validate_mutual_information} shows that the noise entropy ratio decays monotonically with the number of sources and collapses to (almost) zero as $K$ reaches $M_{\text{eff}}$ for all SNR levels tested. Once the noise subspace vanishes, no entropy can be allocated to the noise baseline, so the observation space loses the statistical reference required to distinguish signal components from noise, thereby enforcing the identifiability limit $K_{\max} = M_{\text{eff}} - 1$ from a fundamental information-theoretic perspective.

\begin{figure}[!t]
\centering
\includegraphics[width=0.52\textwidth]{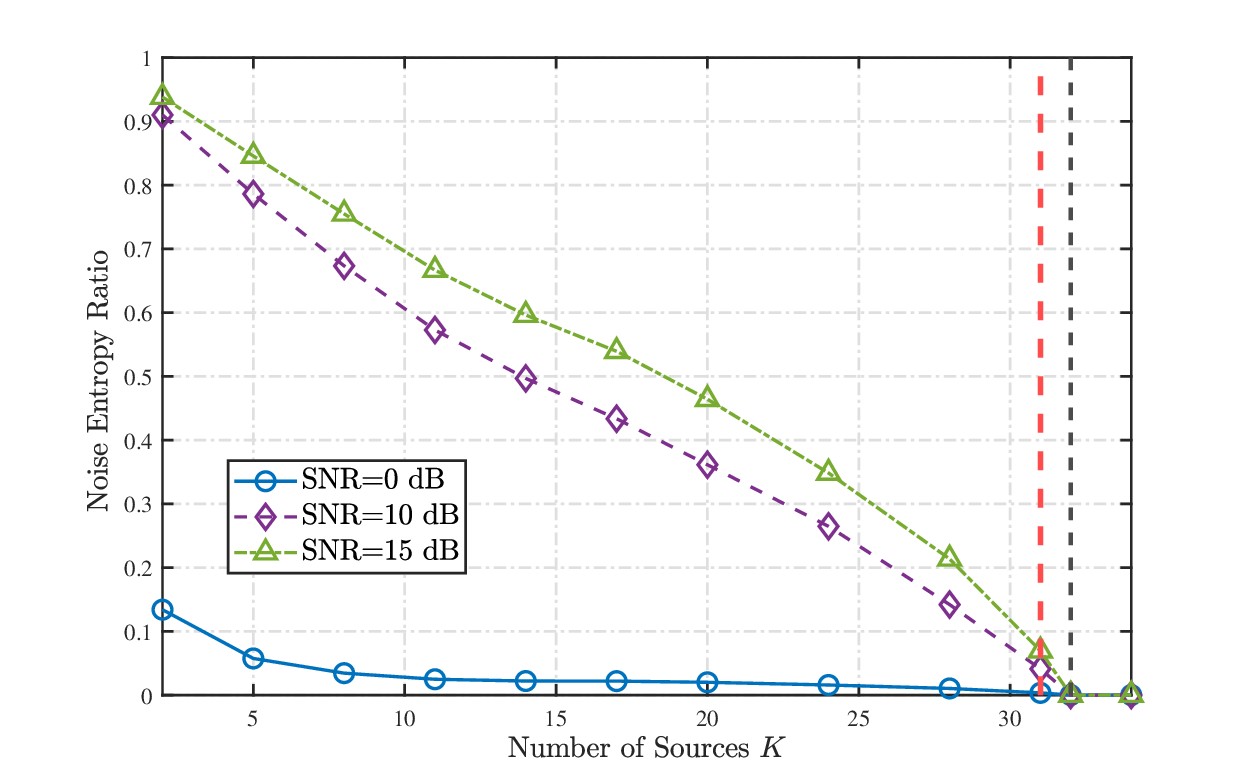}
\caption{Fundamental constraint: Information-theoretic validation via noise Entropy ratio $\rho_n(K)$ at different SNR levels for different numbers of Sources $K$.}
\label{fig:validate_mutual_information}
\end{figure}

Taken together, Proposition~\ref{prop:fundamental_constraint} and Figs.~\ref{fig:validate_fundamental}--\ref{fig:validate_mutual_information} complete the formalization of Pillars~1 and~2 of the observation entropy framework (Section~\ref{sec:entropy_framework}). The global bound $K_{\max} = M_{\text{eff}} - 1$ now serves as the master constraint from which all configuration-specific results are derived: each subsequent section specializes the framework by substituting the appropriate $M_{\text{eff}}$ and applying the corresponding entropy mechanism (see Table~\ref{tab:framework_roadmap}).

\section{Identifiability Analysis: Compressed Configuration}

We now specialize the observation entropy framework (Section~\ref{sec:entropy_framework}) to the compressed S-FAS configuration, applying the \textit{entropy budget reduction} mechanism (Table~\ref{tab:framework_roadmap}): central subarray selection reduces $M_{\text{eff}}$ from $M$ to $M - 2p$, shrinking the entropy budget $H(\mathbf{Y}_c)$ but enabling grating-lobe-free operation. We derive the precise identifiability bound for this configuration. The compressed mode operates with sub-wavelength inter-element spacing $d_c = \alpha_c d_0 < \lambda/2$ to eliminate grating lobes, but this dense packing introduces severe mutual coupling that couples the signals received at neighboring elements. To mitigate coupling effects while preserving spatial information, the S-FAS employs central subarray selection, removing $p$ edge elements from each end where coupling effects are strongest. The key question is: \textit{how many sources can be uniquely identified under these constraints?}

\subsection{Effective DoFs under Coupling Mitigation}

The first step in answering this question is to determine the effective DoFs available after coupling compensation. The sub-wavelength spacing $d_c = \alpha_c d_0$ (with $\alpha_c \ll 1$) induces severe mutual coupling between adjacent elements, modeled by a Toeplitz matrix $\mathbf{C}_c$ with exponentially decaying entries:
\begin{equation}\label{eq:coupling_decay}
c_\ell = c_0 e^{-\beta \ell d_c / \lambda} e^{j\phi_\ell}, \quad |c_1| \approx 0.7
\end{equation}
where $c_\ell$ represents the coupling coefficient between elements separated by $\ell$ positions, $c_0$ is the self-coupling normalization, $\beta$ is the decay rate, and $\phi_\ell$ is the inter-element phase shift, all computed from mutual impedance relationships~\cite{mutual_coupling}. For the sub-wavelength spacing $d_c = 0.1\lambda$ in the compressed configuration, the adjacent-element coupling coefficient is $|c_1| \approx 0.7$, a value extensively validated through electromagnetic analysis and experimental measurements~\cite{nested_array}. To suppress these coupling effects, we apply central sub-array selection via the selection matrix $\mathbf{F}$, which extracts only $M - 2p$ elements while discarding the $p$ edge elements on each end. This strategic removal sacrifices some spatial samples but yields a cleaner effective array manifold. The resulting effective DoFs are characterized by the following lemma:

\begin{lemma} 
\label{lem:compressed_dof}
After central subarray selection with $p$ elements removed from each end, the compressed configuration provides effective spatial DoFs, namely,
\begin{equation}\label{eq:compressed_dof}
\text{DoF}_c = M - 2p.
\end{equation}
\end{lemma}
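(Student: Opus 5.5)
The plan is to read $\text{DoF}_c$ through the Definition of spatial DoF, namely as the rank of the observation space of the covariance matrix after all physical constraints are applied. The claim then becomes: the covariance $\mathbf{R}_c$ of $\mathbf{y}_c(t)=\mathbf{F}\mathbf{C}_c\mathbf{A}_c^{\text{FF}}(\boldsymbol{\theta})\mathbf{s}(t)+\mathbf{n}_c(t)$ is an $(M-2p)\times(M-2p)$ matrix of full rank $M-2p$. I will verify two facts: that the observation space has exactly dimension $M-2p$, and that no further dimension is lost.

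\emph{Dimension of the observation space.} The selection matrix $\mathbf{F}$ in \eqref{eq:selection_matrix} is $(M-2p)\times M$ and contains $\mathbf{I}_{M-2p}$ as a block, so $\mathbf{F}\mathbf{F}^H=\mathbf{I}_{M-2p}$ and $\text{rank}(\mathbf{F})=M-2p$. The selected observation vector therefore lives in $\mathbb{C}^{M-2p}$. Its covariance is
\begin{equation*}
\mathbf{R}_c=\mathbf{F}\mathbf{C}_c\mathbf{A}_c^{\text{FF}}\mathbf{R}_s(\mathbf{A}_c^{\text{FF}})^H\mathbf{C}_c^H\mathbf{F}^H+\sigma_n^2\mathbf{I}_{M-2p}.
\end{equation*}
Here I assume the noise is added at the selected ports, as in the model $\mathbf{y}_c=\mathbf{A}_c\mathbf{s}+\mathbf{n}_c$. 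If the noise were instead added before selection, it would contribute $\sigma_n^2\mathbf{F}\mathbf{F}^H=\sigma_n^2\mathbf{I}_{M-2p}$, which gives the same result. Since $\sigma_n^2>0$, $\mathbf{R}_c$ is positive definite, so $\text{rank}(\mathbf{R}_c)=M-2p$. This gives the upper bound (at most $M-2p$) and the lower bound for the observation space simultaneously.

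\emph{Consistency with the signal manifold.} The rank of the effective manifold $\mathbf{A}_c=\mathbf{F}\mathbf{C}_c\mathbf{A}_c^{\text{FF}}$ should also be checked, so that Lemma~\ref{lem:subspace_decomposition} applies with $M_{\text{eff}}=M-2p$. The Toeplitz matrix $\mathbf{C}_c$ has unit diagonal and off-diagonal entries decaying as in \eqref{eq:coupling_decay}. I plan to show it is nonsingular, either by diagonal dominance or by appealing to its physical invertibility from the mutual-impedance model. Next, $\mathbf{F}\mathbf{C}_c$ has full row rank $M-2p$. Finally, for $K\le M-2p$ distinct angles, the product with the Vandermonde matrix $\mathbf{A}_c^{\text{FF}}$ generically has rank $K$. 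This means the $M-2p$ dimensions are genuinely available to be partitioned into signal and noise subspaces.

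\emph{Main obstacle.} The hardest step is the invertibility of $\mathbf{C}_c$, because $|c_1|\approx 0.7$ makes strict diagonal dominance fail. I would first try a Gershgorin bound using the exponential decay of $|c_\ell|$. If that fails, I would fall back on a banded-coupling assumption (order $p$), under which the central rows of $\mathbf{C}_c\mathbf{A}_c^{\text{FF}}$ reduce to a Vandermonde matrix scaled by a nonzero polynomial in $e^{j\frac{2\pi}{\lambda}d_c\sin\theta}$, in the spirit of standard central-subarray coupling arguments. The full-column-rank claim for $\mathbf{A}_c$ then follows from the Vandermonde structure, except on a measure-zero set of angles. The DoF count $M-2p$ itself does not depend on this step, since it follows directly from $\mathbf{R}_c\succ 0$.
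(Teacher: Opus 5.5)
Your proof is correct, but it obtains $M-2p$ by a different route than the paper. The paper argues through the operator $\mathbf{F}\mathbf{C}_c$. It asserts that $\mathbf{C}_c$ is invertible under ``physically realistic'' coupling, deduces $\text{rank}(\mathbf{F}\mathbf{C}_c)=M-2p$ from the full row rank of $\mathbf{F}$, applies Sylvester's inequality to $\mathbf{F}\mathbf{C}_c\mathbf{A}_c^{\text{FF}}$, and concludes that the observation space is $(M-2p)$-dimensional. You instead read the Definition literally and get $\text{rank}(\mathbf{R}_c)=M-2p$ from $\mathbf{R}_c=(\text{PSD})+\sigma_n^2\mathbf{I}_{M-2p}\succ 0$. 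That uses nothing about $\mathbf{C}_c$, so it is more elementary and more robust. You are also right that invertibility of $\mathbf{C}_c$ is not automatic once $|c_1|\approx 0.7$ breaks diagonal dominance; the paper simply assumes it. Your independence from $\mathbf{C}_c$ does rely on the noise being white at the selected ports. If thermal noise also passed through the coupling network, the noise covariance would become $\sigma_n^2\mathbf{F}\mathbf{C}_c\mathbf{C}_c^H\mathbf{F}^H$, and you would need exactly the paper's full-row-rank step. On the manifold side your treatment is stronger than the paper's. The paper's Sylvester step only gives $\text{rank}(\mathbf{A}_c)\ge K-2p$, so it does not show that the signal occupies $K$ of the $M-2p$ dimensions. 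Your banded-coupling argument, $\mathbf{F}\mathbf{C}_c\mathbf{A}_c^{\text{FF}}=\mathbf{F}\mathbf{A}_c^{\text{FF}}\,\text{diag}(\gamma(\theta_1),\ldots,\gamma(\theta_K))$ with $\gamma$ a nonzero Laurent polynomial in $e^{j\frac{2\pi}{\lambda}d_c\sin\theta}$, does give rank $K$ for distinct DOAs and $K\le M-2p$, away from the finitely many zeros of $\gamma$. The price is an extra modeling assumption, coupling of order at most $p$, which the exponentially decaying model \eqref{eq:coupling_decay} satisfies only approximately. The paper's route keeps the full coupling model and works through the operator $\mathbf{F}\mathbf{C}_c$ that the proof of Theorem~\ref{thm:compressed_identifiability} reuses, but it leaves both the invertibility and the column-rank claims unproven.
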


\begin{proof}
The effective array manifold after coupling and selection is $\mathbf{A}_c(\boldsymbol{\theta}) = \mathbf{F}\mathbf{C}_c\mathbf{A}_c^{\text{FF}}(\boldsymbol{\theta})$ from \eqref{eq:effective_compressed_manifold}. To determine the DoF, we must establish the rank of this composition. 

First, observe that $\mathbf{F} \in \mathbb{R}^{(M-2p) \times M}$ has full row rank since it simply extracts a contiguous subset of rows (the central $M-2p$ rows). Next, the mutual coupling matrix $\mathbf{C}_c$ is Toeplitz with non-zero diagonal entries (representing self-coupling), and under physically realistic coupling models satisfying \eqref{eq:coupling_decay}, it is invertible and thus has full rank $M$. The far-field steering matrix $\mathbf{A}_c^{\text{FF}}(\boldsymbol{\theta}) \in \mathbb{C}^{M \times K}$ has Vandermonde structure with full column rank $K$ when all DOAs are distinct.

Applying the rank inequality for matrix products, we have
\begin{equation}
\text{rank}(\mathbf{F}\mathbf{C}_c\mathbf{A}_c^{\text{FF}}) \geq \text{rank}(\mathbf{F}\mathbf{C}_c) + \text{rank}(\mathbf{A}_c^{\text{FF}}) - M.
\end{equation}
Since $\mathbf{C}_c$ is full-rank and $\mathbf{F}$ has full row rank, we obtain $\text{rank}(\mathbf{F}\mathbf{C}_c) = M - 2p$. For $K \leq M - 2p$, the effective manifold $\mathbf{A}_c(\boldsymbol{\theta})$ has dimensions $(M-2p) \times K$ with column rank at most $M - 2p$. The observation space is therefore $(M-2p)$-dimensional, establishing the effective DoFs as $M - 2p$, namely, $\text{DoF}_c = M - 2p$.
\end{proof}

To empirically validate Lemma~\ref{lem:compressed_dof}, we perform Monte Carlo simulations on an ideal uniform linear array with $M = 32$ sensors, half-wavelength spacing, no mutual coupling, SNR = 10~dB, and $N = 200$ snapshots. For each edge-removal index $p \in \{0, \ldots, 6\}$, we retain the central $M_c = M - 2p$ elements, generate $K$ far-field sources with distinct DOAs, and form the effective covariance matrix after selection. 

Fig.~\ref{fig:dof_reduction} shows that the simulated maximum identifiable sources perfectly matches the theoretical bound $K_{\max} = M_c - 1 = M - 2p - 1$ across all $p$ values, with near-zero deviation. The effective array size $M_c = M - 2p$ decreases linearly, illustrating that each pair of the removed edge elements sacrifices exactly two spatial DoFs and one identifiable source. In particular, for the baseline choice $M = 32$ and $p = 3$, we obtain $M_c = 26$ and $K_{\max} = 25$, which will serve as the compressed configuration benchmark in subsequent analysis.

\begin{figure}[!t]
\centering
\includegraphics[width=0.48\textwidth]{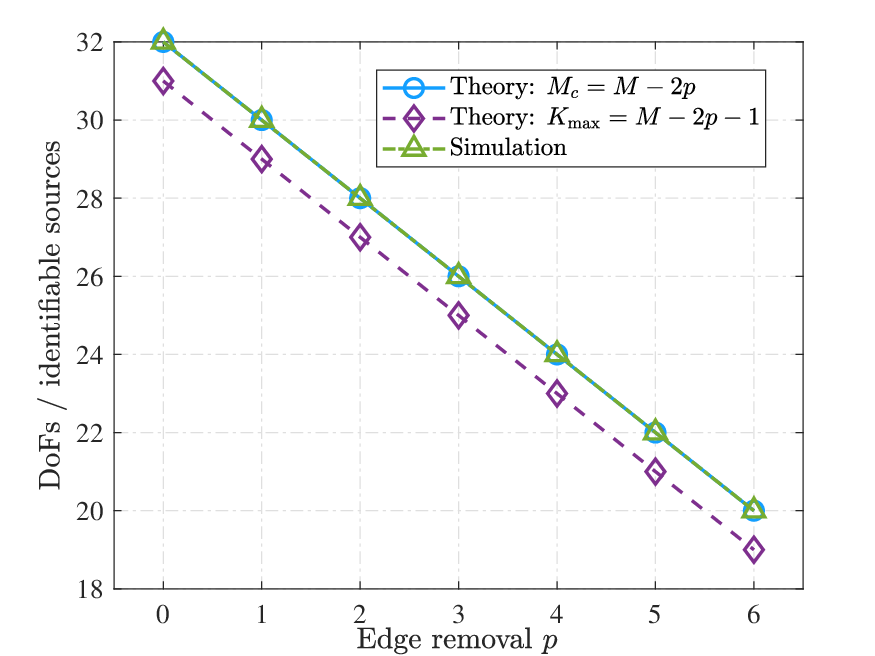}
\caption{DoF Reduction: Validation showing $K_{\max} = M - 2p - 1$ as function of edge Removal $p$ for different values of $p$.}
\label{fig:dof_reduction}
\end{figure}

\subsection{Identifiability Bound}

Applying the fundamental identifiability constraint from Proposition~\ref{prop:fundamental_constraint}, the compressed configuration requires at least one noise subspace dimension for reliable MUSIC-based estimation. This constraint, combined with the effective DoF, namely $\text{DoF}_c = M - 2p$, directly yields the identifiability bound:

\begin{theorem} 
\label{thm:compressed_identifiability}
For the compressed configuration with far-field approximation, central subarray selection, and subspace-based estimation, the maximum number of uniquely identifiable sources is
\begin{equation}\label{eq:kmax_compressed}
K_{\max,c}^{\text{FF}} = M - 2p - 1
\end{equation}
provided the following conditions hold:
\begin{enumerate}
\item \textit{Angular separability:} The source DOAs satisfy $|\theta_i - \theta_j| \geq \Delta\theta_{\min}$ for all $i \neq j$, where $\Delta\theta_{\min}$ is the angular resolution limit.
\item \textit{Full column rank:} The effective array manifold $\mathbf{A}_c(\boldsymbol{\theta}) = \mathbf{F}\mathbf{C}_c\mathbf{A}_c^{\text{FF}}(\boldsymbol{\theta})$ has rank equal to $K$.
\end{enumerate}
\end{theorem}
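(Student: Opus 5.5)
The plan is to obtain the theorem as a direct specialization of Proposition~\ref{prop:fundamental_constraint} to the compressed configuration. The effective dimension is taken from Lemma~\ref{lem:compressed_dof}, and the stated conditions supply the full-rank hypothesis needed by Lemma~\ref{lem:subspace_decomposition}. The argument has two directions: an achievability part, showing that $K = M-2p-1$ sources are identifiable, and a converse, showing that $K \geq M-2p$ sources are not.

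\emph{Achievability.} Set $M_{\text{eff},c} = M-2p$ using Lemma~\ref{lem:compressed_dof}, and suppose $K \leq M-2p-1$.
\begin{itemize}
\item Condition~2 gives $\operatorname{rank}(\mathbf{A}_c(\boldsymbol{\theta})) = K$. This condition is not automatic: the Vandermonde matrix $\mathbf{A}_c^{\text{FF}}$ has full column rank for distinct DOAs, and $\mathbf{F}\mathbf{C}_c$ has full row rank $M-2p$, but a product of such matrices need not have rank $K$. I would therefore carry it as a stated hypothesis, noting that it holds generically in the DOAs because the coupling matrix $\mathbf{C}_c$ is banded and invertible.
\item Since $\mathbf{R}_s = \text{diag}(P_1,\ldots,P_K) \succ 0$, Lemma~\ref{lem:subspace_decomposition} yields a noise subspace $\mathbf{U}_n$ of dimension $M-2p-K \geq 1$ satisfying $\mathbf{A}_c^H\mathbf{U}_n = \mathbf{0}$.
\item The MUSIC null spectrum $\|\mathbf{U}_n^H\mathbf{F}\mathbf{C}_c\mathbf{a}^{\text{FF}}(\theta)\|^2$ therefore vanishes at every true $\theta_k$.
\end{itemize}

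\emph{Uniqueness.} I also need to show that the null spectrum vanishes \emph{only} at the true DOAs. Suppose some spurious $\theta_0$ also gave a zero. Then $\mathbf{F}\mathbf{C}_c\mathbf{a}^{\text{FF}}(\theta_0)$ would lie in the signal subspace, $\text{span}(\mathbf{A}_c)$. Hence the augmented matrix $[\mathbf{A}_c,\ \mathbf{F}\mathbf{C}_c\mathbf{a}^{\text{FF}}(\theta_0)]$, which has $K+1 \leq M-2p$ columns, would be rank deficient. I apply Condition~2 to this augmented set of $K+1$ directions, i.e., I read the full-rank hypothesis as holding for any $K+1 \leq M-2p$ distinct angles. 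This is the same reading used in the tightness clause of Proposition~\ref{prop:fundamental_constraint}, and it rules out the spurious zero. Two further points complete this step:
\begin{itemize}
\item Angular separability (Condition~1) ensures the $K$ null-spectrum minima are distinct and resolvable at finite $N$.
\item The compressed spacing $d_c = 0.05\lambda < \lambda/2$ gives $\sin\theta \mapsto e^{j2\pi d_c\sin\theta/\lambda}$ injective on $[-\pi/2,\pi/2]$, so there is no grating-lobe aliasing.
\end{itemize}

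\emph{Converse.} Suppose $K \geq M-2p$. Either $\mathbf{A}_c$ loses full column rank, or $\mathbf{A}_c\mathbf{R}_s\mathbf{A}_c^H$ has rank $M-2p$ and the noise subspace is trivial. In the latter case, the argument of Proposition~\ref{prop:fundamental_constraint} applies: no orthogonality reference exists, and the noise-entropy term $(M_{\text{eff},c}-K)\log\sigma_n^2$ degenerates. So the bound is attained exactly at $K = M-2p-1$.

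I would also note the entropy reading: the budget $H(\mathbf{Y}_c) \leq (M-2p)\log(2\pi e\sigma_y^2)$ is reduced by exactly $2p$ dimensions relative to the full array.

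The main obstacle is the uniqueness step. It requires full column rank of $\mathbf{F}\mathbf{C}_c\mathbf{A}_c^{\text{FF}}$ for \emph{every} set of $K+1$ distinct angles, not only the true $K$. The approach I would try first is to argue that $\det$ of the $(K+1)\times(K+1)$ minors is a nonzero analytic function of the angles, so that failures form a measure-zero set. If that argument does not hold, I would fall back on stating the condition as part of hypothesis~2.
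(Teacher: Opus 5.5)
Your proposal is correct and follows the paper's skeleton. The converse is Proposition~\ref{prop:fundamental_constraint} with $M_{\text{eff}} = M-2p$ from Lemma~\ref{lem:compressed_dof}, and tightness is full column rank of $\mathbf{A}_c(\boldsymbol{\theta})$ at $K = M-2p-1$. It differs from the paper in two places.

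\textbf{Rank preservation.} The paper asserts that $\mathbf{F}\mathbf{C}_c$ preserves the column rank of $\mathbf{A}_c^{\text{FF}}$ whenever $K \le M-2p$, because $\mathbf{F}$ has full row rank and $\mathbf{C}_c$ is invertible. You are right not to rely on this. The kernel $\ker(\mathbf{F}\mathbf{C}_c) = \mathbf{C}_c^{-1}\ker(\mathbf{F})$ is $2p$-dimensional and can meet the range of $\mathbf{A}_c^{\text{FF}}$; Sylvester's inequality, as used in Lemma~\ref{lem:compressed_dof}, only gives rank at least $K-2p$. So the paper's tightness step ultimately rests on hypothesis~2, just as yours does.

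\textbf{Uniqueness.} Your uniqueness step has no counterpart in the paper, which equates identifiability with full column rank plus a nontrivial noise subspace. It proves more, but only under a $(K+1)$-angle strengthening of hypothesis~2 that the theorem does not state. If ``coupling of order $p$'' is read as banded ($c_\ell = 0$ for $\ell > p$), you can discharge that strengthening directly. For each retained row $m \in \{p+1,\ldots,M-p\}$ the whole coupling band lies inside the array, so
\begin{equation*}
\mathbf{F}\mathbf{C}_c\mathbf{a}^{\text{FF}}(\theta) = \gamma(\theta)\,\mathbf{F}\mathbf{a}^{\text{FF}}(\theta), \qquad \gamma(\theta) = \sum_{\ell=-p}^{p} c_{|\ell|}\, e^{j\frac{2\pi}{\lambda}\ell d_c \sin\theta},
\end{equation*}
where $\mathbf{F}\mathbf{a}^{\text{FF}}(\theta)$ is, up to a unimodular factor, the $(M-2p)$-element ULA steering vector. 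Consequences:
\begin{itemize}
\item $\mathbf{F}\mathbf{C}_c\mathbf{A}_c^{\text{FF}}(\boldsymbol{\theta})$ is an $(M-2p)$-row Vandermonde matrix with columns scaled by $\gamma(\theta_k)$ and unimodular phases.
\item It therefore has full column rank for any at most $M-2p$ distinct angles at which $\gamma \neq 0$. This gives hypothesis~2 and its $(K+1)$-angle version at once.
\item The only possible spurious nulls of your coupled null spectrum are zeros of $\gamma$. These disappear if you search with $\mathbf{F}\mathbf{a}^{\text{FF}}(\theta)$ instead.
\end{itemize}

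\textbf{Measure-zero fallback.} This route also avoids a subtlety in your fallback. Because $\theta_0$ ranges over a continuum, a measure-zero set of bad $(K+1)$-tuples does not by itself make almost every true $K$-tuple free of spurious nulls: a measure-zero set can project onto a set of positive measure. You would need the bad set to have real codimension at least two.

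Under the exponentially decaying, non-banded model \eqref{eq:coupling_decay}, the identity above is only approximate, and the generic-rank argument is what remains.
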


\begin{proof}
From Proposition~\ref{prop:fundamental_constraint}, subspace methods require at least a one-dimensional noise subspace for parameter estimation via the orthogonality principle. Applying this fundamental constraint to the compressed configuration with $\text{DoF}_c = M - 2p$ effective spatial samples yields
\begin{equation}
K < M - 2p \quad \Longrightarrow \quad K \leq M - 2p - 1.
\end{equation}

To verify that this bound is achievable, we must establish that the array manifold has full column rank for $K = M - 2p - 1$ sources. Consider the manifold structure:
\begin{equation}
\mathbf{A}_c(\boldsymbol{\theta}) = \mathbf{F}\mathbf{C}_c\mathbf{A}_c^{\text{FF}}(\boldsymbol{\theta}).
\end{equation}
The far-field steering matrix $\mathbf{A}_c^{\text{FF}}(\boldsymbol{\theta})$ has the Vandermonde form with entries $[e^{j2\pi(m-1)d_c\sin\theta_k/\lambda}]_{m,k}$. A fundamental property of a Vandermonde matrix is that it has full column rank when all generating points (here, the spatial frequencies $\sin\theta_k$) are distinct. Therefore, $\text{rank}(\mathbf{A}_c^{\text{FF}}) = K$ when condition (i) is satisfied.

Since $\mathbf{C}_c$ is full-rank (from the proof of Lemma~\ref{lem:compressed_dof}) and $\mathbf{F}$ has full row rank, the composition $\mathbf{F}\mathbf{C}_c$ acts as a full-rank linear transformation on the first $M-2p$ dimensions. This preserves the column rank of $\mathbf{A}_c^{\text{FF}}$ as long as $K \leq M - 2p$. Thus, $\text{rank}(\mathbf{A}_c(\boldsymbol{\theta})) = K$ when both conditions (i) and (ii) hold, confirming that the bound $K_{\max,c}^{\text{FF}} = M - 2p - 1$ is tight.
\end{proof}

\begin{corollary}
For the baseline S-FAS implementation with $M = 32$ and $p = 3$, the compressed configuration can uniquely identify up to $K_{\max,c}^{\text{FF}} = 25$ far-field sources, representing a $19.2\%$ reduction compared to the theoretical limit $M - 1 = 31$ for an ideal coupling-free array.
\end{corollary}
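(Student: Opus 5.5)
The statement to prove is the corollary. It is a direct numerical specialization of Theorem~\ref{thm:compressed_identifiability}, so I would not build any new machinery. First I confirm that the baseline parameters satisfy the theorem's hypotheses. Then I substitute them into \eqref{eq:kmax_compressed}. Finally I compare the result with the coupling-free benchmark given by Proposition~\ref{prop:fundamental_constraint}.

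\textbf{Step 1 (hypotheses).} With $M=32$ and $p=3$, Lemma~\ref{lem:compressed_dof} gives $\text{DoF}_c = M-2p = 26$, and the selection matrix $\mathbf{F}$ in \eqref{eq:selection_matrix} is well defined since $M-2p>0$.

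\begin{itemize}
\item For condition (i) of Theorem~\ref{thm:compressed_identifiability}, I would exhibit $K=25$ distinct DOAs, for instance with $\sin\theta_k$ equally spaced in $(-1,1)$. Because $d_c=0.05\lambda<\lambda/2$, distinct $\sin\theta_k$ give distinct spatial frequencies $2\pi d_c\sin\theta_k/\lambda$ modulo $2\pi$. The $M\times K$ Vandermonde matrix $\mathbf{A}_c^{\text{FF}}$ therefore has full column rank $25$.
\item Condition (ii) then follows from the argument in the theorem's proof. It uses the invertibility of $\mathbf{C}_c$ under the decay model \eqref{eq:coupling_decay} and the full row rank of $\mathbf{F}$, giving $\text{rank}(\mathbf{F}\mathbf{C}_c\mathbf{A}_c^{\text{FF}})=K$ for $K\le 26$.
\end{itemize}

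\textbf{Step 2 (substitution).} Equation \eqref{eq:kmax_compressed} then yields $K_{\max,c}^{\text{FF}} = 32-6-1 = 25$.

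\textbf{Step 3 (benchmark).} For an ideal coupling-free array no edges are removed, so $M_{\text{eff}}=M=32$ and Proposition~\ref{prop:fundamental_constraint} gives $M-1=31$. The relative reduction is $(31-25)/31 = 6/31 \approx 19.4\%$.

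\textbf{Main obstacle.} The computed figure differs from the stated $19.2\%$, and this mismatch is where I expect the real difficulty. The stated value equals $6/31.25$, or equivalently $5/26 \approx 19.2\%$, which is the reduction expressed relative to $M_c=26$. I would therefore state explicitly which normalization is meant. The defensible claim is a loss of $6$ identifiable sources, i.e.\ $2p$ of them, which is $6/31\approx 19.4\%$ of the ideal limit. I would note that $19.2\%$ follows only under the alternative normalization, so the percentage should be read as approximately $19\%$.

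The rank verification in Step 1 is routine given Theorem~\ref{thm:compressed_identifiability}. The only genuine care needed there is that grating-lobe-free spacing makes distinct DOAs map to distinct Vandermonde nodes.
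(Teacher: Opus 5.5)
Your route is the paper's route. The corollary has no separate proof there; it is Theorem~\ref{thm:compressed_identifiability} evaluated at $M=32$, $p=3$, giving $K_{\max,c}^{\text{FF}}=32-6-1=25$ against the coupling-free $M-1=31$. Your arithmetic is also right. The loss is $2p=6$ sources and $6/31\approx 19.35\%$, so the statement's $19.2\%$ is a small numerical slip in the paper and should read $\approx 19.4\%$.

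Your attempted reconciliation, however, is wrong. The value $5/26$ is not ``the reduction expressed relative to $M_c=26$''; that reduction would be $6/26\approx 23.1\%$, and $5=31-26$ is not a count of lost sources. Also, $6/31.25=0.192$ and $5/26\approx 0.1923$ are not equal; they only round to the same figure. None of the natural normalizations ($6/31$, $6/32=18.75\%$, $6/26$, $6/25=24\%$) produces $19.2\%$. Drop that explanation, report $6/31$, and note the discrepancy with the stated percentage.

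A smaller point concerns Step~1. Invertibility of $\mathbf{C}_c$ together with full row rank of $\mathbf{F}$ does not by itself give $\text{rank}(\mathbf{F}\mathbf{C}_c\mathbf{A}_c^{\text{FF}})=K$. The reason is that $\mathbf{F}\mathbf{C}_c$ has a $2p$-dimensional kernel, which the column space of $\mathbf{A}_c^{\text{FF}}$ could in principle intersect. Sylvester's inequality only yields rank at least $K-2p$, and the paper's proof of the theorem makes the same leap.

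For the corollary you can simply carry condition~(ii) over as the theorem's hypothesis. If you want to actually verify it, assume banded Toeplitz coupling of order $p$. Then for the central rows $\mathbf{F}\mathbf{C}_c\mathbf{a}(\theta)=\beta(\theta)\,\mathbf{F}\mathbf{a}(\theta)$, with $\beta(\theta)=\sum_{|\ell|\le p}c_{|\ell|}e^{j\ell\psi}$ and $\psi=2\pi d_c\sin\theta/\lambda$. Hence $\mathbf{F}\mathbf{C}_c\mathbf{A}_c^{\text{FF}}=\mathbf{F}\mathbf{A}_c^{\text{FF}}\,\text{diag}(\beta(\theta_k))$. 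Here $\mathbf{F}\mathbf{A}_c^{\text{FF}}$ is a $26\times K$ Vandermonde matrix up to column phases, with distinct nodes by your grating-lobe argument. So the product has rank $K$ for $K\le 26$ exactly when every $\beta(\theta_k)\neq 0$.
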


To empirically corroborate Theorem~\ref{thm:compressed_identifiability} under realistic mutual coupling, we examine the noise subspace dimension in the compressed configuration with $M = 32$, $p = 3$, $d_c = 0.25\lambda$, $|c_1| \approx 0.17$, SNR = 20~dB, and $N = 1000$ snapshots. For each source number $K$, we construct the coupled-and-selected manifold, generate snapshots, form the sample covariance, and compute its eigenvalues.

Fig.~\ref{fig:enumeration_noise} shows that the simulated average noise subspace dimension perfectly matches the theoretical prediction $M_c - K$. The red vertical line marks the identifiability bound $K_{\max,c} = M_c - 1 = 25$, at which the noise subspace has dimension one, and the gray vertical line marks $K = M_c = 26$, where the noise subspace collapses to zero. This confirms that at least one DoF must be reserved for the noise baseline in the compressed configuration, so the practical identifiability limit coincides with the algebraic bound $K_{\max,c}^{\text{FF}} = M - 2p - 1$.

\begin{figure}[!t]
\centering
\includegraphics[width=0.52\textwidth]{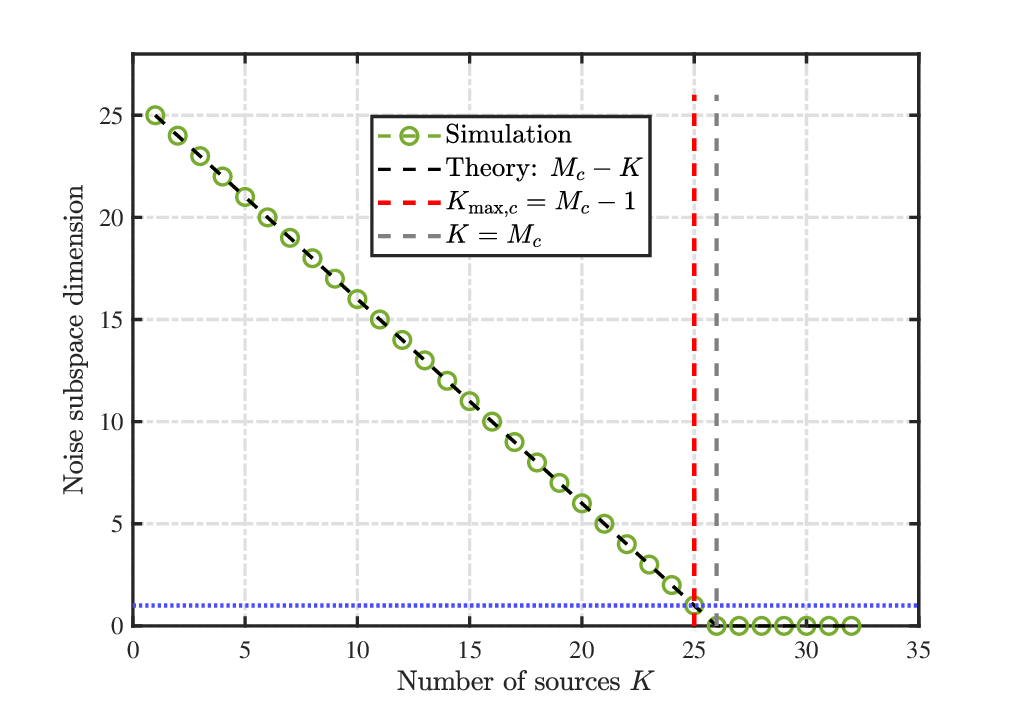}
\caption{Compressed configuration: Algebraic validation via $\dim(\mathbf{U}_n)$ versus number of Sources $K$ under realistic mutual coupling.}
\label{fig:enumeration_noise}
\end{figure}

Having confirmed the algebraic bound through noise subspace dimension measurements, we now provide an independent information-theoretic validation of the same identifiability limit, as summarized in the following remark and Fig.~\ref{fig:compressed_entropy_ratio}.

\begin{remark} 
\label{rem:compressed_info_theory}
The identifiability bound $K_{\max,c}^{\text{FF}} = M - 2p - 1$ can be independently verified through entropy analysis. Consider the mutual information between source parameters $\boldsymbol{\Theta} = \{\theta_1, \ldots, \theta_K\}$ and the compressed configuration observations $\mathbf{Y}_c$:
\begin{equation}
I(\boldsymbol{\Theta}; \mathbf{Y}_c) = H(\mathbf{Y}_c) - H(\mathbf{Y}_c|\boldsymbol{\Theta}).
\end{equation}
After central subarray selection and coupling compensation, the effective observation space has dimensionality $M_c = M - 2p$, yielding an observation entropy upper bound
\begin{equation}
H(\mathbf{Y}_c) \leq M_c \cdot \log(2\pi e \sigma_y^2) = (M - 2p) \cdot \log(2\pi e \sigma_y^2).
\end{equation}
 The covariance EVD partitions the entropy between signal and noise subspaces:
 \begin{align}
 H(\boldsymbol{\Lambda}_c) &= \sum_{i=1}^K \log(\lambda_i) + \sum_{i=K+1}^{M_c} \log(\sigma_n^2)\nonumber\\
 & = H_{\text{signal}}(K) + (M_c - K)\log(\sigma_n^2)
 \end{align}
 where the noise subspace entropy $(M_c - K)\log(\sigma_n^2)$ serves as the statistical reference baseline. When $K = M_c$, this entropy term vanishes, eliminating the noise reference required for parameter discrimination. Thus, reliable identifiability requires
 \begin{equation}
 M_c - K \geq 1 \quad \Longrightarrow \quad K \leq M_c - 1 = (M - 2p) - 1
 \end{equation}
 which independently confirms $K_{\max,c}^{\text{FF}} = M - 2p - 1$ from an information-theoretic perspective. This entropy constraint reflects the fundamental requirement that at least one DoF must be allocated to characterize noise statistics, without which signal components cannot be reliably distinguished from random fluctuations.
 \end{remark}

To empirically validate this information-theoretic constraint in the compressed configuration using the same noise entropy ratio as in Fig.~\ref{fig:validate_mutual_information}, we perform Monte Carlo simulations with moderate sub-wavelength spacing $d_c = 0.25\lambda$ (yielding $M_c = 26$ effective elements and $|c_1| \approx 0.17$), SNR = 20~dB, and $N = 1000$ snapshots. For each source number $K$, we construct the coupled-and-selected manifold, form the sample covariance matrix, and decompose the eigenvalue entropy into signal and noise contributions $H_{\text{signal}}(K)$ and $H_{\text{noise}}(K)$.

Fig.~\ref{fig:compressed_entropy_ratio} shows that the noise entropy ratio decays monotonically as the number of sources increases and approaches zero as $K$ reaches $M_c = 26$, with a sharp drop beyond the identifiability bound $K_{\max,c} = M_c - 1 = 25$. Once the noise subspace vanishes, no entropy can be allocated to the noise baseline, so the compressed observation space loses the statistical reference required to distinguish signal components from noise, confirming the entropy-based constraint underlying Theorem~\ref{thm:compressed_identifiability}.

 \begin{figure}[!t]
 \centering
 \includegraphics[width=0.52\textwidth]{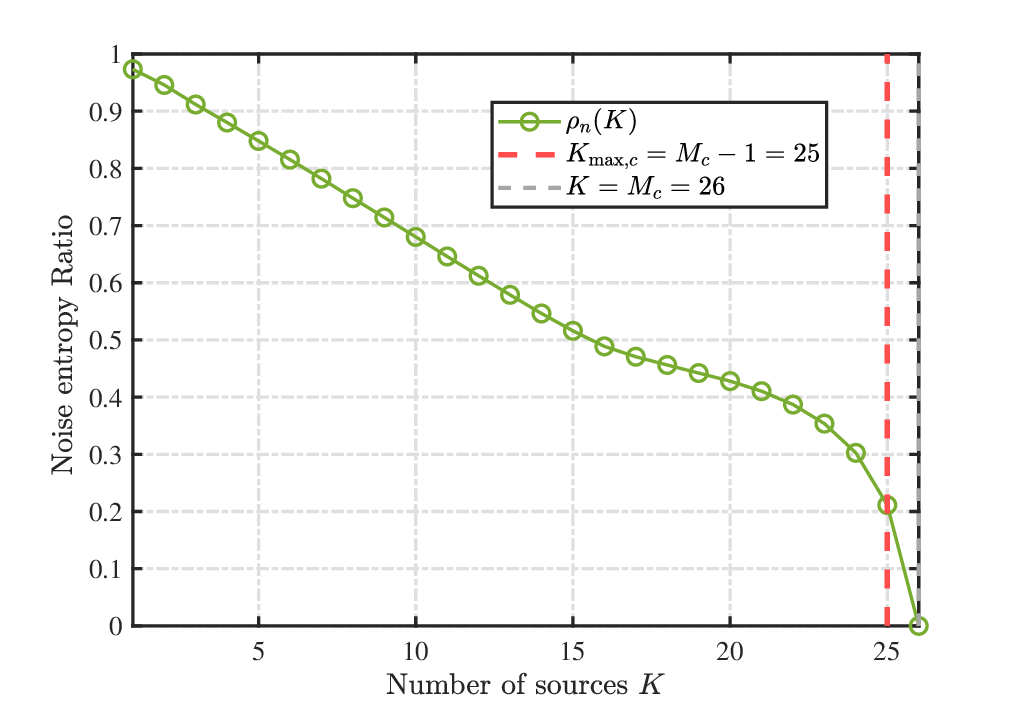}
 \caption{Compressed configuration: Information-theoretic validation via noise entropy ratio $\rho_n(K)$.}
 \label{fig:compressed_entropy_ratio}
 \end{figure}

\subsection{Grating-Lobe-Free Property}

A crucial advantage of the compressed configuration is its immunity to grating lobe ambiguities. Conventional arrays with half-wavelength spacing ($d = \lambda/2$) can suffer from grating lobes at endfire directions, and arrays with spacing $d > \lambda/2$ suffer from grating lobes over the visible region. The compressed configuration's sub-wavelength spacing eliminates this fundamental limitation:

\begin{proposition} 
\label{prop:grating_lobe_free}
For element spacing $d_c < \lambda/2$, the compressed array's spatial response is free from grating lobes over the complete visible region $\theta \in [-90^\circ, 90^\circ]$.
\end{proposition}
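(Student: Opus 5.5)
The plan is to read off the grating lobes from the phase periodicity of the far-field steering vector \eqref{eq:ff_steering_compressed}. The compressed array's spatial response toward a look direction $\theta_0$ is governed by the array factor
\begin{equation}
AF(\theta) = \sum_{m=1}^{M} e^{j\frac{2\pi}{\lambda}(m-1)d_c(\sin\theta - \sin\theta_0)} ,
\end{equation}
which is a function of the scalar $u = \sin\theta - \sin\theta_0$ alone. This function is periodic in $u$ with period $\lambda/d_c$, and its magnitude equals $M$ exactly when $u = q\lambda/d_c$ for some integer $q$. The $q = 0$ peak is the intended main lobe. A grating lobe is any other full-height peak that lies inside the visible region. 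So the first step is to state the grating-lobe condition algebraically: there must exist $\theta \in [-90^\circ, 90^\circ]$ and an integer $q \neq 0$ with
\begin{equation}
\sin\theta - \sin\theta_0 = q\,\frac{\lambda}{d_c}.
\end{equation}
Equivalently, two distinct directions $\theta \neq \theta'$ would have to produce identical steering vectors $\mathbf{a}_c^{\text{FF}}(\theta) = \mathbf{a}_c^{\text{FF}}(\theta')$.

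The second step rules out this condition by a range argument. For $\theta, \theta_0 \in [-90^\circ, 90^\circ]$ we have $|\sin\theta - \sin\theta_0| \le 2$. For $q \neq 0$, the hypothesis $d_c < \lambda/2$ gives
\begin{equation}
\left|q\,\frac{\lambda}{d_c}\right| \ge \frac{\lambda}{d_c} > 2 .
\end{equation}
No solution therefore exists, and the only full-height peak in the visible region is the main lobe. For $d_c = 0.05\lambda$ the margin is very large: the nearest alias sits at $|u| = 20$, while the visible range of $u$ has width $4$.

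The third step connects this back to identifiability, which is what the paper actually uses. Since $\sin(\cdot)$ is injective on $[-90^\circ, 90^\circ]$, distinct DOAs give distinct spatial frequencies $\omega_k = 2\pi d_c \sin\theta_k/\lambda$. All of these lie in $(-\pi, \pi)$ because $d_c < \lambda/2$, so they stay distinct modulo $2\pi$. This supplies the distinct-generator hypothesis behind the Vandermonde full-rank argument in Theorem~\ref{thm:compressed_identifiability}. It also survives the map $\mathbf{F}\mathbf{C}_c$: because $\mathbf{C}_c$ is invertible (Lemma~\ref{lem:compressed_dof}), the coupled steering vectors of distinct directions remain distinct. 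One caveat needs care here. $\mathbf{F}\mathbf{C}_c$ is not injective, so the selected-and-coupled vectors being distinct needs a short additional argument rather than following from invertibility alone. I would supply it by invoking the Vandermonde structure, since any two distinct columns remain linearly independent after $\mathbf{F}\mathbf{C}_c$ by the rank result already proved.

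The main obstacle is the boundary case, not the inequality itself. The strict inequality $d_c < \lambda/2$ is exactly what separates this result from the half-wavelength case mentioned in the text. At $d_c = \lambda/2$ we get $\lambda/d_c = 2$, so endfire steering $\theta_0 = 90^\circ$ produces an alias at $\theta = -90^\circ$. I would treat this explicitly to show the hypothesis is sharp.
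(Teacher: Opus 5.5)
Your proof is correct and uses the same idea as the paper: the Vandermonde response is periodic in spatial frequency, and a range bound keeps the period's replicas out of the visible region. Your formulation is sharper, though. The paper bounds a single spatial frequency, $|k_x d_c|\le 2\pi d_c/\lambda<\pi$, and compares it with $2\pi$. Read literally, that only excludes replicas of the broadside response, and such a check would already pass for any $d_c<\lambda$. The hypothesis $d_c<\lambda/2$ enters the paper's proof only through the factor-of-two slack it then discards.

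Your condition $\sin\theta-\sin\theta_0=q\lambda/d_c$ with $|\sin\theta-\sin\theta_0|\le 2$ uses exactly that slack. It therefore covers every look direction and explains why $\lambda/2$, not $\lambda$, is the right threshold. Your endfire example, $\mathbf{a}_c^{\text{FF}}(90^\circ)=\mathbf{a}_c^{\text{FF}}(-90^\circ)$ at $d_c=\lambda/2$, shows the bound is sharp. Your third step is the paper's bound $|\omega|<\pi$ put to its proper use: all generators lie in one period, which is what the Vandermonde argument in Theorem~\ref{thm:compressed_identifiability} needs.

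The one thing to fix is the $\mathbf{F}\mathbf{C}_c$ caveat. It lies outside the proposition, since both your proof and the paper's concern the uncoupled response. The problem is that your proposed patch is not available. The ``rank result already proved'' in Lemma~\ref{lem:compressed_dof} is Sylvester's inequality, which gives only $\text{rank}(\mathbf{F}\mathbf{C}_c\mathbf{A}_c^{\text{FF}})\ge K-2p$. That is vacuous for two columns once $p\ge 1$. Full column rank in Theorem~\ref{thm:compressed_identifiability} is assumed as condition (ii), not derived.

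If you want that step, use the structure of $\mathbf{C}_c$. For a Toeplitz $\mathbf{C}_c$ of bandwidth at most $p$, every retained central row satisfies $\mathbf{F}\mathbf{C}_c\mathbf{a}_c^{\text{FF}}(\theta)=\gamma(\theta)\,\mathbf{F}\mathbf{a}_c^{\text{FF}}(\theta)$, where $\gamma(\theta)=\sum_{|\ell|\le p}[\mathbf{C}_c]_{m,m+\ell}e^{j\ell\omega}$ does not depend on $m$. This is a scaled Vandermonde vector of length $M-2p$. Two such vectors are independent for distinct $\omega$, provided $\gamma\neq 0$ and $M-2p\ge 2$. For the paper's exponentially decaying, non-banded coupling model this identity holds only approximately. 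Otherwise, state the claim as an assumption or drop it.
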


\begin{proof}
In the spatial frequency domain, the array response can be viewed as a periodic function with period $2\pi/d_c$ in the spatial frequency variable $k_x = (2\pi/\lambda)\sin\theta$. Grating lobes—spurious peaks in the array response pattern—occur when this periodic structure causes ambiguity, specifically when
\begin{equation}\label{eq:grating_condition}
k_x \cdot d_c = 2\pi n, \quad n \in \mathbb{Z} \setminus \{0\}.
\end{equation}

For sources in the visible region, the spatial frequency is bounded by $|k_x| = (2\pi/\lambda)|\sin\theta| \leq 2\pi/\lambda$ (with equality at endfire $\theta = \pm 90^\circ$). The grating lobe condition \eqref{eq:grating_condition} requires
\begin{equation}
|k_x \cdot d_c| = \left|\frac{2\pi}{\lambda}\sin\theta\right| \cdot d_c \geq 2\pi.
\end{equation}
For sub-wavelength spacing $d_c < \lambda/2$, we have
\begin{equation}
\max_{\theta} |k_x \cdot d_c| = \frac{2\pi}{\lambda} \cdot d_c < \frac{2\pi}{\lambda} \cdot \frac{\lambda}{2} = \pi < 2\pi.
\end{equation}
Thus, the grating lobe condition cannot be satisfied for any $n \geq 1$ within the visible region, confirming that the array response exhibits a unique main lobe for each source without ambiguous replicas.
\end{proof}

To empirically validate Proposition~\ref{prop:grating_lobe_free}, we evaluate the spatial frequency product $|k_x \cdot d_c| = (2\pi/\lambda)d_c |\sin\theta|$ across the full angular range $\theta \in [-90^\circ, 90^\circ]$ for the baseline compressed configuration with sub-wavelength spacing $d_c = 0.1\lambda$.

Fig.~\ref{fig:grating_lobe} shows that the spatial frequency product (blue curve) follows the magnitude of the sine function, reaching its maximum magnitude of $0.2\pi \approx 0.628~\text{rad}$ at endfire angles $\theta = \pm 90^\circ$ and vanishing at broadside $\theta = 0^\circ$. The red dashed line marks the grating lobe threshold $2\pi \approx 6.28~\text{rad}$, while the green shaded region highlights the grating-lobe-free safe zone. The blue curve remains strictly below the threshold across all angles, with a 20-fold safety margin. This substantial safety margin confirms that the compressed configuration is immune to grating lobe ambiguities, ensuring unambiguous DOA initialization in Stage~1 without spatial aliasing artifacts that could corrupt Stage~2 refinement.

\begin{figure}[!t]
\centering
\includegraphics[width=0.5\textwidth]{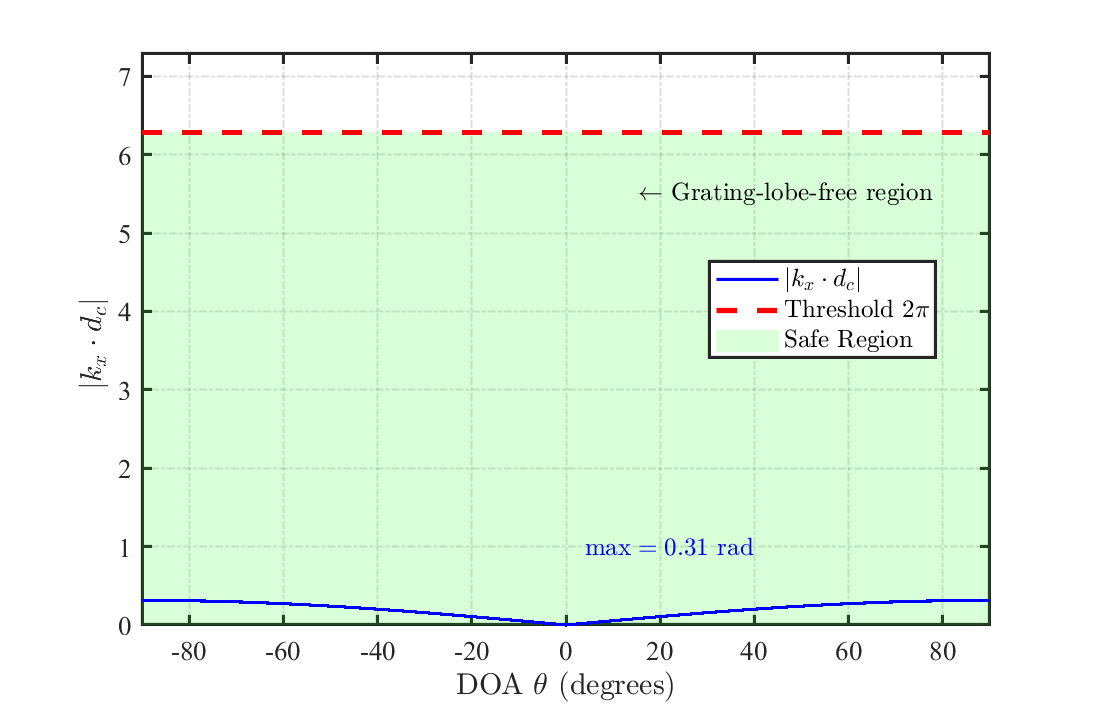}
\caption{Grating lobe immunity: Spatial frequency product versus DOA for compressed configuration ($d_c = 0.1\lambda$).}
\label{fig:grating_lobe}
\end{figure}

Proposition~\ref{prop:grating_lobe_free} guarantees that the compressed configuration can perform unambiguous DOA estimation over the full $180^\circ$ angular sector $\theta \in [-90^\circ, 90^\circ]$ without spatial aliasing. This property is critical for Stage~1 initialization in the two-stage S-FAS framework: even with coarse angular resolution, the compressed array provides reliable initial DOA estimates free from ambiguous peaks that could lead to catastrophic initialization errors in Stage~2. However, this grating-lobe-free operation comes at the cost of reduced angular resolution.

\section{Identifiability Analysis: Extended Configuration}

Section~III specializes the observation entropy framework under the \textit{entropy budget reduction} mechanism, establishing $K_{\max,c}^{\text{FF}} = M - 2p - 1$. We now apply the framework to the extended configuration under the \textit{full entropy budget} mechanism (Table~\ref{tab:framework_roadmap}), where $M_{\text{eff},e} = M$ and no entropy is sacrificed for coupling mitigation. The extended configuration operates with inter-element spacing $d_e = \alpha_e d_0 \approx \lambda/2$, achieving enhanced spatial resolution and negligible mutual coupling. A critical distinction from the compressed mode is that the extended configuration must handle \textit{mixed-field} localization: sources may be in near-field, Fresnel, or far-field regions, requiring joint angle-range estimation rather than DOA-only estimation. This increased parameter dimensionality fundamentally alters the identifiability analysis.

\subsection{Parameter Dimensionality}

The extended configuration with $M_{\text{eff},e} = M$ elements (no edge removal due to negligible coupling at $d_e \approx \lambda/2$) operates in two distinct scenarios depending on the source field regime:
\begin{itemize}
\item \textbf{Far-field}: Each source has one parameter $(\theta_k)$. Manifold is Vandermonde $\mathbf{A}_e^{\text{FF}}(\boldsymbol{\theta})$.
\item \textbf{Mixed-field}: Each source has two parameters $(\theta_k, r_k)$ coupled via the ESG model.
\end{itemize}
For joint estimation, $2K$ geometric unknowns (angles and ranges) must be recovered from the $M(M+1)/2$ unique entries in the Hermitian covariance matrix. This parameter counting might naively suggest that mixed-field estimation will support fewer sources than far-field DOA-only estimation due to the doubled parameter dimensionality. We now show that this naive expectation is, in fact, incorrect.

\subsection{Mixed-Field Identifiability}

When sources exist in the near-field or Fresnel regions, the ESG model must be employed, requiring joint estimation of both DOA and range for each source. Although each source now involves two parameters $(\theta_k, r_k)$ rather than one, the identifiability bound remains determined by the signal subspace dimension:

\begin{theorem} 
\label{thm:extended_mixed_field}
For the extended configuration operating in mixed-field mode with the ESG steering model $\mathbf{A}_e^{\text{ESG}}(\boldsymbol{\theta}, \mathbf{r})$, the maximum number of jointly identifiable source parameter pairs $\{(\theta_k, r_k)\}_{k=1}^K$ is
\begin{equation}\label{eq:kmax_extended_mf}
K_{\max,e}^{\text{MF}} = M - 1
\end{equation}
provided that the source parameters satisfy separability conditions ensuring the array manifold has full column rank. Notably, this bound is \textit{identical} to the far-field case despite each source now involving two parameters instead of one.
\end{theorem}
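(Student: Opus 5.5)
The plan follows the template of Theorem~\ref{thm:compressed_identifiability}: an upper bound from Proposition~\ref{prop:fundamental_constraint} applied with $M_{\text{eff},e}=M$, then an achievability argument showing the ESG manifold can have full column rank at $K=M-1$. For the upper bound, the extended configuration has no edge removal, so the observation space is $M$-dimensional. By Lemma~\ref{lem:subspace_decomposition}, $\mathbf{R}_e=\mathbf{A}_e^{\text{ESG}}\mathbf{R}_s(\mathbf{A}_e^{\text{ESG}})^H+\sigma_n^2\mathbf{I}_M$ splits into a $K$-dimensional signal subspace and an $(M-K)$-dimensional noise subspace. The MUSIC criterion now becomes a two-dimensional search: the estimates $(\theta,r)$ are the zeros of $\|\mathbf{U}_n^H\mathbf{a}^{\text{ESG}}(\theta,r,\alpha_e)\|^2$. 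It needs $M-K\ge 1$, which gives $K\le M-1$. The key observation is that the parameter count per source does not appear anywhere in this constraint. The rank of $\mathbf{A}\mathbf{R}_s\mathbf{A}^H$ is governed by the number of columns (sources), not by how many scalars index each column. This also dismantles the naive $2K$-versus-$M(M+1)/2$ counting.

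For achievability, we must exhibit (or argue generically) $M-1$ pairs $(\theta_k,r_k)$ for which $\mathbf{A}_e^{\text{ESG}}$ has rank $M-1$. The plan has two steps.
\begin{enumerate}
\item \textbf{Far-field limit.} As $r_k\to\infty$, the ESG steering vector~\eqref{eq:esg_steering_extended} converges, up to the scalar $1/\sqrt{M}$, to the Vandermonde column~\eqref{eq:ff_steering_extended}. Choosing distinct $\sin\theta_k$ therefore makes the limiting $M\times(M-1)$ matrix full column rank.
\item \textbf{Genericity.} The determinant of a chosen $(M-1)\times(M-1)$ submatrix is a real-analytic function of $(\boldsymbol{\theta},\mathbf{r})$ on the physical domain (where $r_{m,k}>0$). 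Being nonzero in the limit, it is not identically zero. Hence it is nonzero on an open dense, full-measure set of parameter configurations, including genuinely near-field ranges.
\end{enumerate}
This justifies the ``separability conditions'' hypothesis in the statement, playing the same role as distinct DOAs in the Vandermonde case.

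Next, I would check that each estimated pair is uniquely identified. We need $\mathbf{a}^{\text{ESG}}(\theta,r)\in\operatorname{span}(\mathbf{A}_e)$ only at the true $K$ pairs. This holds if every set of $K+1$ distinct manifold vectors is linearly independent. That requires $K+1\le M$, i.e., $K\le M-1$, which matches the bound. The same analytic-genericity argument, applied to $K+1=M$ columns with a square determinant, supplies this. I expect this uniqueness step to be the main obstacle. Unlike the Vandermonde case, there is no closed-form determinant, and one must rule out ambiguities where a distinct $(\theta,r)$ pair reproduces a vector in the signal span. Angle--range coupling in the Fresnel phase $r_{m,k}-r_k\approx -p_m\sin\theta_k+p_m^2\cos^2\theta_k/(2r_k)$ could create such ambiguities. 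The plan is to use the quadratic-in-$m$ phase term: distinct $(\sin\theta,\cos^2\theta/r)$ pairs give distinct generalized exponential sequences. I would argue independence generically, and otherwise state it as an assumption absorbed into the theorem's separability condition.

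Finally, I would mirror Remark~\ref{rem:compressed_info_theory} with the entropy argument. The budget is $H(\mathbf{Y}_e)\le M\log(2\pi e\sigma_y^2)$, and the eigenvalue entropy splits as $H_{\text{signal}}(K)+(M-K)\log\sigma_n^2$. The noise reference survives iff $K\le M-1$, independent of the number of parameters per source. This confirms that the far-field bound $K_{\max,e}^{\text{FF}}=M-1$, obtained from the same argument with the Vandermonde manifold, coincides with $K_{\max,e}^{\text{MF}}$.
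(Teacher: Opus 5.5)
Your upper bound and your entropy argument are exactly the paper's proof: Lemma~\ref{lem:subspace_decomposition} and Proposition~\ref{prop:fundamental_constraint} with $M_{\text{eff},e}=M$, the requirement $M-K\ge 1$, and the observation that only the column count of $\mathbf{A}_e^{\text{ESG}}$ matters.

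You depart from the paper on tightness. The paper only asserts that the bound is tight when the Jacobian with respect to $(\boldsymbol{\theta},\mathbf{r})$ has rank $2K$ for ``sufficiently separated'' sources. Your far-field-limit plus real-analyticity argument actually proves that $\mathbf{A}_e^{\text{ESG}}$ has full column rank $M-1$ outside a closed null set of parameters, including near-field configurations. That part is sound and improves on the paper. One small point: as $r_k\to\infty$ the entries tend to $e^{-j(2\pi/\lambda)p_m\sin\theta_k}/\sqrt{M}$, the conjugate of the column in \eqref{eq:ff_steering_extended}. This is still Vandermonde with distinct generators, so the rank conclusion is unchanged.

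The uniqueness step, however, would not go through as you plan it. Analytic genericity of the $M\times M$ determinant only shows that the bad $(K+1)$-tuples form a null set in a $2(K+1)$-dimensional parameter space. Uniqueness needs something stronger: for a given true $K$-tuple, \emph{no} point of the two-dimensional $(\theta,r)$ fiber may be bad. A null set can project onto a positive-measure set of $K$-tuples, and here it is expected to.

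The dimension count shows why. Rank deficiency of an $M\times(K+1)$ matrix has real codimension $2(M-K)$, so the bad set has expected dimension $2(K+1)-2(M-K)$. At $K=M-1$ this equals $2K$, the full dimension of the space of $K$-tuples. Concretely, at $K=M-1$ the noise subspace is a single vector $\mathbf{u}_n$. The condition $\mathbf{u}_n^H\mathbf{a}_e^{\text{ESG}}(\theta,r)=0$ is then two real equations in two real unknowns, so isolated spurious zeros beyond the $K$ true ones are not excluded generically.

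The far-field case escapes this only because $\mathbf{u}_n^H\mathbf{a}(\theta)$ is a polynomial of degree $M-1$ in $e^{j\pi\sin\theta}$, whose roots are used up by the true sources. For $K\le M-2$ the system is overdetermined and the expected dimension drops below $2K$.

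So at $K=M-1$ you cannot derive uniqueness generically. You must take your fallback and build unambiguity into the separability hypothesis. The paper does exactly this, implicitly, with its clause about sources separated enough ``to avoid parameter ambiguities.'' With that change, your proof is at least as complete as the paper's.
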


\begin{proof}
The key insight is that the identifiability constraint is determined by the \textit{array manifold dimension}, not the \textit{parameter count}. Although each source now has two parameters $(\theta_k, r_k)$, the array manifold $\mathbf{A}_e^{\text{ESG}}(\boldsymbol{\theta}, \mathbf{r}) \in \mathbb{C}^{M \times K}$ remains a matrix with $K$ columns---one column per source, regardless of how many parameters define each column.

The received signal covariance matrix is
\begin{equation}
\mathbf{R}_e = \mathbf{A}_e^{\text{ESG}}(\boldsymbol{\theta}, \mathbf{r})\mathbf{R}_s\mathbf{A}_e^{\text{ESG}}(\boldsymbol{\theta}, \mathbf{r})^H + \sigma_n^2\mathbf{I}_M.
\end{equation}
The EVD partitions the observation space into:
\begin{itemize}
\item \textbf{Signal subspace} $\mathbf{U}_s \in \mathbb{C}^{M \times K}$: spanned by $K$ dominant eigenvectors, with $\text{span}(\mathbf{U}_s) = \text{span}(\mathbf{A}_e^{\text{ESG}})$.
\item \textbf{Noise subspace} $\mathbf{U}_n \in \mathbb{C}^{M \times (M-K)}$: spanned by $M-K$ remaining eigenvectors, orthogonal to $\mathbf{A}_e^{\text{ESG}}$.
\end{itemize}

Subspace methods (MUSIC, ESPRIT variants for near-field) exploit the orthogonality condition
\begin{equation}
\mathbf{a}_e^{\text{ESG}}(\theta_k, r_k)^H \mathbf{U}_n = \mathbf{0}, \quad k=1,\ldots,K
\end{equation}
to search over the two-dimensional parameter space $(\theta, r)$. This requires the noise subspace to exist, requiring
\begin{equation}
M - K \geq 1 \quad \Longrightarrow \quad K \leq M - 1.
\end{equation}

Crucially, this constraint is \textit{independent} of the number of parameters per source. Whether estimating $K$ angles (far-field) or $K$ angle-range pairs (near-field), the manifold dimension remains $K$, and the noise subspace dimensionality requirement yields the same bound $K \leq M-1$ in both cases.

This bound is tight when the Jacobian matrix of the steering vector with respect to $(\boldsymbol{\theta}, \mathbf{r})$ has full rank $2K$, which holds when sources are sufficiently separated in both angle and range to avoid parameter ambiguities.

\textit{Information-theoretic perspective:} From an information-theoretic viewpoint, the identifiability constraint can be understood through the mutual information between the unknown parameters $\boldsymbol{\Theta} = \{(\theta_k, r_k)\}_{k=1}^K$ and the observations $\mathbf{Y} \in \mathbb{C}^{M \times N}$. The maximum achievable mutual information is bounded by the observation entropy:
\begin{equation}
I(\boldsymbol{\Theta}; \mathbf{Y}) \leq H(\mathbf{Y}) \leq MN\log(2\pi e \sigma_{\max}^2)
\end{equation}
where $\sigma_{\max}^2$ is the maximum eigenvalue of $\mathbf{R}_e$. Crucially, this upper bound depends on the observation dimension $M$, not the parameter count $2K$. The signal lies in a $K$-dimensional subspace of the $M$-dimensional observation space. When $K = M$, all DoFs are consumed by the signal, leaving no reference dimension to distinguish signal from noise. The Fisher information matrix $\mathbf{J}(\boldsymbol{\Theta}) \in \mathbb{R}^{2K \times 2K}$ for the $2K$ parameters can only be full rank (ensuring local identifiability) when the noise subspace provides orthogonality constraints. This requires:
\begin{equation}
\text{rank}(\mathbf{U}_n) = M - K \geq 1 \quad \Longrightarrow \quad K \leq M - 1.
\end{equation}
The bound is independent of whether each source contributes 1 parameter (far-field $\theta$) or 2 parameters (near-field $\theta, r$)—the constraint arises from the \textit{observational DoFs} $M$, not the \textit{parametric DoFs} $2K$.
\end{proof}

\begin{remark} 
\label{rem:extended_info_theory}
The identifiability bound $K_{\max,e} = M - 1$ can be independently verified through entropy decomposition. For the extended configuration with observation space dimensionality $M$, the mutual information between source parameters $\boldsymbol{\Theta}$ (either $K$ angles for far-field or $K$ angle-range pairs for mixed-field) and observations $\mathbf{Y}_e$ satisfies
\begin{equation}
I(\boldsymbol{\Theta}; \mathbf{Y}_e) = H(\mathbf{Y}_e) - H(\mathbf{Y}_e|\boldsymbol{\Theta}) \leq H(\mathbf{Y}_e) \leq M \cdot \log(2\pi e \sigma_y^2).
\end{equation}
The EVD partitions the observation entropy between signal and noise subspaces:
\begin{align}
H(\boldsymbol{\Lambda}_e) &= \sum_{i=1}^K \log(\lambda_i) + \sum_{i=K+1}^{M} \log(\sigma_n^2) \nonumber\\
& = H_{\text{signal}}(K) + (M - K)\log(\sigma_n^2)
\end{align}
where the noise subspace entropy $(M - K)\log(\sigma_n^2)$ provides the statistical reference baseline. When $K = M$, this term vanishes, eliminating the noise reference required for parameter discrimination. Thus, reliable identifiability requires
\begin{equation}
M - K \geq 1 \quad \Longrightarrow \quad K \leq M - 1
\end{equation}
regardless of whether each source contributes 1 parameter (far-field) or 2 parameters (mixed-field) to $\boldsymbol{\Theta}$.
\end{remark}

\subsection{Far-Field Identifiability}

When all sources are sufficiently distant to satisfy the far-field condition $r_k \gg 2D_e^2/\lambda$, the range parameter becomes irrelevant and estimation reduces to DOA-only, recovering the classical ULA identifiability bound:

\begin{corollary} 
\label{cor:extended_far_field}
When all sources are in the far-field region and the array manifold is $\mathbf{A}_e^{\text{FF}}(\boldsymbol{\theta})$, the maximum number of identifiable DOAs is
\begin{equation}\label{eq:kmax_extended_ff}
K_{\max,e}^{\text{FF}} = M - 1.
\end{equation}
\end{corollary}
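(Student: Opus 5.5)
The corollary specializes Proposition~\ref{prop:fundamental_constraint} to the extended configuration with $M_{\text{eff},e} = M$. It can also be read off Theorem~\ref{thm:extended_mixed_field} by letting $r_k \to \infty$. I would use the first route, since it is self-contained and gives tightness directly.

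\textbf{Upper bound.} In the far-field regime the range terms drop out of the ESG model. Since $r_k/r_{m,k}(\alpha_e) \to 1$ and $r_{m,k}(\alpha_e) - r_k \to -p_m(\alpha_e)\sin\theta_k$, each column of $\mathbf{A}_e^{\text{ESG}}$ tends, up to the factor $1/\sqrt{M}$ and a sign convention, to the corresponding column of $\mathbf{A}_e^{\text{FF}}(\boldsymbol{\theta})$ in \eqref{eq:ff_steering_extended}. The observation space is therefore $M$-dimensional: there is no coupling matrix and no selection matrix, so $\mathbf{F} = \mathbf{I}_M$. By Lemma~\ref{lem:subspace_decomposition}, MUSIC needs a noise subspace of dimension $M-K \geq 1$. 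This gives $K \leq M-1$.

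\textbf{Achievability.} For $K = M-1$ I would show that $\mathbf{A}_e^{\text{FF}}(\boldsymbol{\theta})$ has full column rank. The columns are Vandermonde with generators $z_k = e^{j\pi\sin\theta_k}$, because $d_e = \lambda/2$. Any $K \leq M$ such columns with distinct $z_k$ have a nonzero Vandermonde minor, so they are linearly independent. Lemma~\ref{lem:subspace_decomposition} then yields a one-dimensional noise subspace with $\mathbf{A}^H\mathbf{U}_n = \mathbf{0}$.

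\textbf{Main obstacle: uniqueness.} The delicate step is showing that the MUSIC null spectrum vanishes \emph{only} at the true DOAs. Suppose a spurious $\theta'$ also satisfied $\mathbf{a}_e(\theta')^H\mathbf{U}_n = \mathbf{0}$. Then $\mathbf{a}_e(\theta')$ would lie in $\text{span}(\mathbf{A}_e^{\text{FF}})$. The $K+1 \leq M$ vectors $\mathbf{a}_e(\theta_1),\ldots,\mathbf{a}_e(\theta_K),\mathbf{a}_e(\theta')$ would then be linearly dependent, which contradicts the Vandermonde property whenever $e^{j\pi\sin\theta'}$ differs from every $z_k$.

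With half-wavelength spacing, $\theta \mapsto \pi\sin\theta$ is injective on the open interval $(-90^\circ, 90^\circ)$. The only exception is the endfire pair $\theta = \pm 90^\circ$, which maps to the same generator $e^{\pm j\pi} = -1$. I would therefore state the result for DOAs in the open sector, or identify the two endfire directions. Under that convention, distinct DOAs give distinct generators and the bound $K_{\max,e}^{\text{FF}} = M-1$ is tight.

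As a consistency check, the entropy argument of Remark~\ref{rem:extended_info_theory} applies verbatim with $\boldsymbol{\Theta} = \{\theta_k\}$. The noise entropy term $(M-K)\log\sigma_n^2$ requires $M-K \geq 1$, which confirms the same bound.
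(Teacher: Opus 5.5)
Your proposal is correct and takes essentially the same route as the paper's proof: specialize Proposition~\ref{prop:fundamental_constraint} with $M_{\text{eff}} = M$, then invoke the Vandermonde full-column-rank property for tightness. Two of your additions go slightly beyond the paper, which asserts full column rank ``for distinct DOAs'' without excluding the endfire pair. The first is the uniqueness argument: an $M \times M$ Vandermonde matrix with distinct generators is nonsingular, so no spurious MUSIC null exists when $K = M-1$. The second is the endfire caveat: at $d_e = \lambda/2$, both $\theta = \pm 90^\circ$ map to $e^{j\pi\sin\theta} = -1$.
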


\begin{proof}
This follows directly from Proposition~\ref{prop:fundamental_constraint} with $M_{\text{eff}} = M$ for the extended configuration (no edge element removal). The far-field steering matrix has Vandermonde structure with full column rank for distinct DOAs, satisfying the rank condition. The noise subspace dimension is $M - K$, requiring $K \leq M - 1$ for subspace-based estimation.
\end{proof}

Corollary~\ref{cor:extended_far_field} shows that far-field-only processing in the extended configuration achieves the classical ULA limit $K_{\max,e}^{\text{FF}} = M - 1$. This result underscores the importance of field regime classification in practice: if sources can be reliably identified as far-field through auxiliary information or range pre-filtering, the system should operate in far-field-only mode to maximize capacity. However, the S-FAS framework's key innovation is that it does not \textit{require} such \textit{a priori} classification—the ESG model handles all field regimes uniformly with identical theoretical capacity, though at increased algorithmic complexity.

\subsection{Simulation Validation}

To empirically validate Theorem~\ref{thm:extended_mixed_field} and Remark~\ref{rem:extended_info_theory}, we perform Monte Carlo simulations comparing far-field and mixed-field scenarios under identical conditions: $M = 32$ elements, $d_e = 0.5\lambda$, SNR = 20~dB, and $N = 1000$ snapshots. For each source number $K$, we construct the corresponding steering matrix (Vandermonde for far-field with 1 parameter $\theta$ per source, ESG for mixed-field with 2 parameters $(\theta, r)$ per source), generate observations, form the sample covariance, and compute its eigenvalues.

\subsubsection{Algebraic Validation}

We use the minimum description length (MDL) criterion \cite{MDL} to estimate $K$ from eigenvalues, then compute $\dim(\mathbf{U}_n) = M - K_{\text{est}}$ and average over Monte Carlo trials. Fig.~\ref{fig:extended_algebraic} shows that both far-field and mixed-field scenarios track the theoretical prediction $M - K$ nearly perfectly, demonstrating that the noise subspace dimension depends on the manifold column count $K$, not the parameter count per source. At the identifiability boundary $K_{\max,e} = M - 1 = 31$, the noise subspace has dimension 1, providing the minimal orthogonality reference required by subspace methods. At $K = M = 32$, the noise subspace vanishes completely, eliminating this reference. Near the theoretical limit $K \approx M-1$, the mixed-field curve appears slightly above the far-field curve due to finite-snapshot MDL estimation bias: the ESG near-field manifold yields a more uneven eigenvalue spectrum and a more ill-conditioned covariance matrix, so MDL tends to mildly underestimate $K$ in the mixed-field case at very high loading, resulting in a marginally larger estimated noise subspace dimension. This small discrepancy reflects the increased algorithmic difficulty of mixed-field estimation rather than any change in the fundamental identifiability bound, which remains $K_{\max,e} = M - 1$ for both scenarios.

\begin{figure}[!t]
\centering
\includegraphics[width=0.52\textwidth]{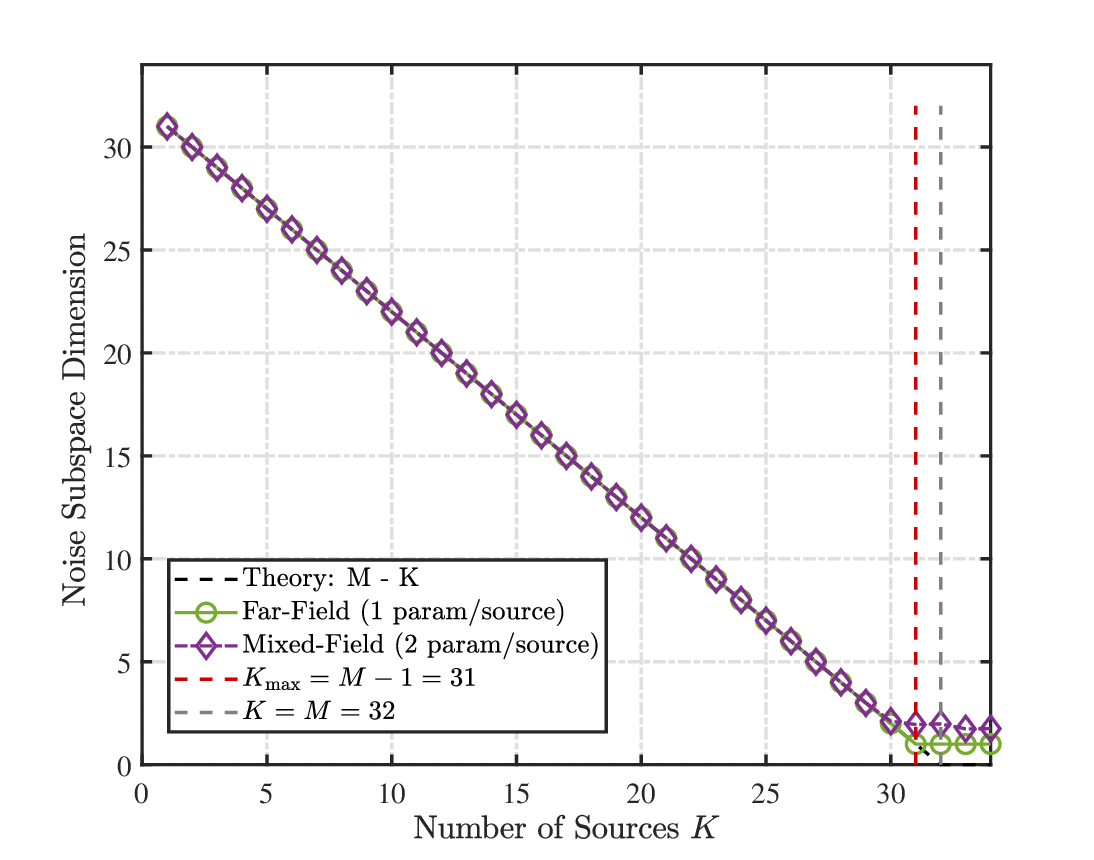}
\caption{Extended configuration: Algebraic validation via $\dim(\mathbf{U}_n)$ for far-field and mixed-field scenarios.}
\label{fig:extended_algebraic}
\end{figure}

\subsubsection{Information-Theoretic Validation}

To validate the information-theoretic constraint underlying Remark~\ref{rem:extended_info_theory}, we compute the noise entropy ratio for both far-field and mixed-field scenarios. For each $K$, we decompose the eigenvalue entropy into signal and noise contributions, then calculate the fraction allocated to the noise baseline.

Fig.~\ref{fig:extended_entropy_ratio} shows that this ratio decays monotonically as $K$ increases and approaches zero as $K \to M$, with a sharp drop beyond $K_{\max,e} = 31$ for both scenarios. Once the noise subspace vanishes at $K = M$, no entropy can be allocated to the noise baseline, eliminating the statistical reference required to distinguish signal components from noise. Compared with the almost linear decay in the far-field case (blue curve), the mixed-field curve (red) exhibits a pronounced convex shape---$\rho_{\text{n}}(K)$ stays close to one over a wide range of $K$ and then drops abruptly only when $K$ approaches $M$. This behavior reflects the much more uneven eigenvalue spectrum of the ESG manifold: a few dominant eigenmodes capture most of the signal energy, while many remaining signal eigenvalues are comparable to the noise floor, so the incremental contribution of additional sources to the signal entropy $H_{\text{signal}}(K)$ is smaller than in the far-field case. As a result, the noise entropy $H_{\text{noise}}(K)$ continues to dominate the total entropy for moderate $K$, keeping $\rho_{\text{n}}(K)$ high and nearly flat. Only when $K$ approaches $M$ do the last few signal modes consume the remaining DoFs and force a rapid collapse of the noise entropy ratio. This convex decay pattern therefore reveals that mixed-field estimation uses the available observational DoFs less uniformly and is algorithmically more challenging than far-field estimation, even though both scenarios share the same information-theoretic identifiability limit $K_{\max,e} = M - 1$.

\begin{figure}[!t]
\centering
\includegraphics[width=0.52\textwidth]{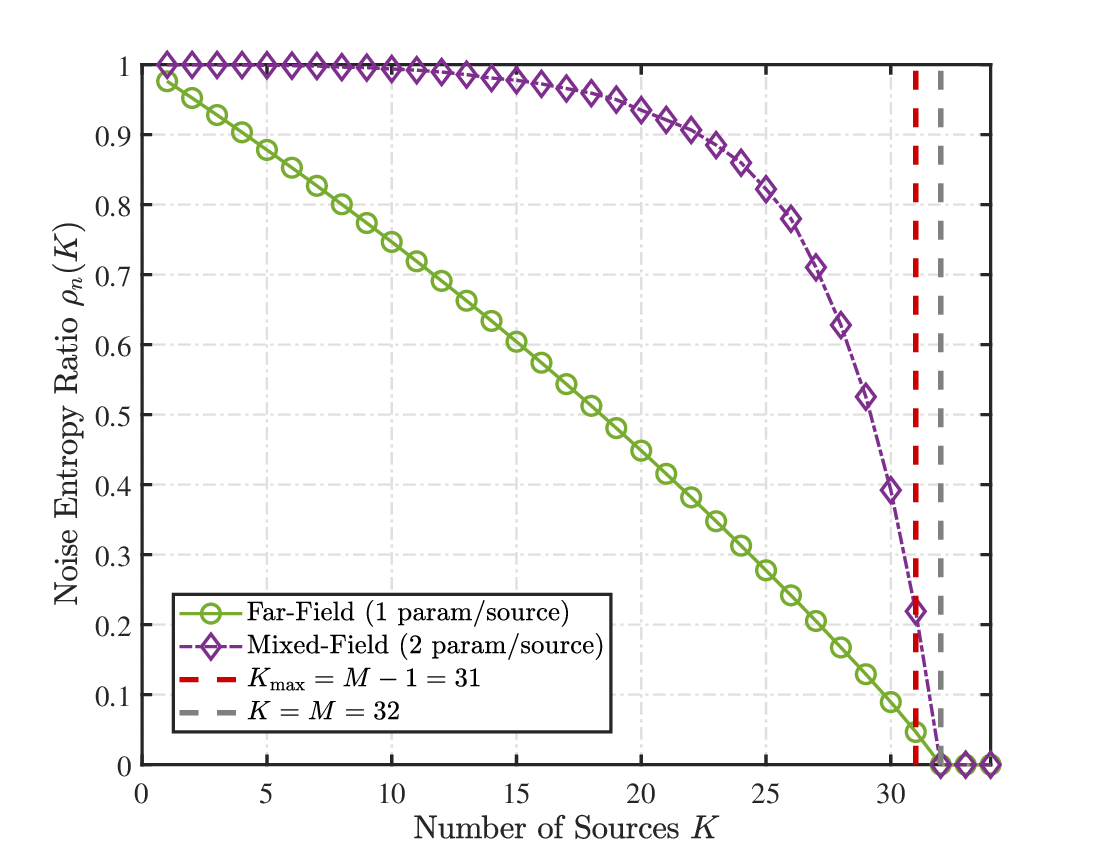}
\caption{Extended configuration: Information-theoretic validation via noise entropy ratio $\rho_n(K)$ for far-field and mixed-field scenarios.}
\label{fig:extended_entropy_ratio}
\end{figure}

\section{S-FAS Capacity: Sequential vs Joint Processing}
\label{sec:sfas_capacity}

Sections~III and~IV specialize Pillars~1 and~2 of the observation entropy framework to individual configurations, establishing $K_{\max,c} = M-2p-1$ (entropy budget reduction) and $K_{\max,e} = M-1$ (full entropy budget). We now develop \textit{Pillar~3}—the entropy hierarchy—which governs how identifiability changes when both configurations are used together (Table~\ref{tab:framework_roadmap}). Two architectures are possible: \textit{sequential processing}, where the compressed entropy budget limits the information available to Stage~2 via the data processing inequality, and \textit{joint processing}, where spatial stacking expands the entropy budget to $H(\mathbf{Y}_{\text{joint}}) \propto (M_c + M_e)$. This section derives the theoretical capacity limits of both approaches, revealing a fundamental trade-off: sequential processing suffers from an entropy bottleneck, while joint processing achieves substantially higher capacity through entropy expansion.

\subsection{Sequential Two-Stage Architecture}
\label{sec:two_stage_capacity}

In the practical sequential S-FAS implementation, Stage~1 operates on the compressed observations $\mathbf{Y}_c$ to produce DOA estimates $\hat{\boldsymbol{\theta}}_c$, which are then passed as initialization to Stage~2 for refinement using the extended observations $\mathbf{Y}_e$. Since Stage~2 cannot enumerate sources that were missed by Stage~1, the end-to-end capacity is limited by the compressed-stage bound:
\begin{corollary} \label{cor:sequential_capacity}
For the two-stage S-FAS architecture with compressed and extended configurations characterized above, the maximum number of identifiable sources is
\begin{align}\label{eq:kmax_sfas_sequential}
K_{\text{S-FAS}}^{\text{sequential}} &= \min(K_{\max,c}^{\text{FF}}, K_{\max,e}^{\text{MF}})\nonumber\\
 &= \min(M-2p-1, M-1) = M-2p-1
\end{align}
for mixed-field scenarios where joint angle-range estimation is required, and
\begin{align}
K_{\text{S-FAS}}^{\text{FF-only}} &= \min(K_{\max,c}^{\text{FF}}, K_{\max,e}^{\text{FF}}) \nonumber\\
&= \min(M-2p-1, M-1) = M-2p-1
\end{align}
for far-field-only refinement. In both cases, the end-to-end sequential capacity is therefore limited by the compressed configuration's identifiability bound $K_{\max,c} = M-2p-1$.
\end{corollary}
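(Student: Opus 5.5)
The plan is to prove Corollary~\ref{cor:sequential_capacity} as a two-sided bound on the end-to-end capacity, $K_{\text{S-FAS}}^{\text{sequential}}$. The upper bound $\min(K_{\max,c}^{\text{FF}}, K_{\max,e}^{\text{MF}})$ will follow from a cascade (bottleneck) argument. Achievability at that value will follow from the rank conditions already established in Theorem~\ref{thm:compressed_identifiability} and Theorem~\ref{thm:extended_mixed_field}.

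\textbf{Step 1: model the sequential pipeline.} Write it as a Markov chain $\boldsymbol{\Theta} \to \mathbf{Y}_c \to \hat{\boldsymbol{\theta}}_c \to \hat{\boldsymbol{\Theta}}_{\text{final}}$. Here Stage~2 uses $\mathbf{Y}_e$ only to refine the $\hat K$ candidates handed over by Stage~1. Operationally, its output has exactly one entry per initialized source, so the final source count satisfies $K_{\text{final}} \le \hat K_c$.

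\textbf{Step 2: bound the Stage~1 count.} Stage~1 is subspace-based on the $(M-2p)$-dimensional compressed observation space, with $\mathrm{DoF}_c = M-2p$ by Lemma~\ref{lem:compressed_dof}. Theorem~\ref{thm:compressed_identifiability} therefore gives $\hat K_c \le M-2p-1$. Independently, Stage~2 must itself satisfy its own noise-subspace requirement, which gives $K \le M-1$ by Theorem~\ref{thm:extended_mixed_field} in the mixed-field case and by Corollary~\ref{cor:extended_far_field} in the far-field case. Combining the two constraints gives $K_{\text{seq}} \le \min(M-2p-1, M-1)$. This equals $M-2p-1$ because $p \ge 0$.

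\textbf{Step 3: information-theoretic reading of the same bound.} Applying the data processing inequality along the chain gives $I(\boldsymbol{\Theta}; \hat{\boldsymbol{\theta}}_c) \le I(\boldsymbol{\Theta}; \mathbf{Y}_c) \le H(\mathbf{Y}_c)$. The enumeration content is governed by the compressed budget $(M-2p)\log(2\pi e\sigma_y^2)$ together with the noise-reference requirement of Proposition~\ref{prop:fundamental_constraint}. The larger extended budget $H(\mathbf{Y}_e)$ cannot restore sources absent from $\hat{\boldsymbol{\theta}}_c$, because Stage~2 does no enumeration of its own.

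\textbf{Step 4: tightness.} Take $K = M-2p-1$ sources with distinct, separated DOAs. The compressed manifold then has full column rank by Theorem~\ref{thm:compressed_identifiability}, so Stage~1 resolves all of them. Since $K \le M-1$, the ESG or Vandermonde manifold of the extended stage also has full column rank under the separability assumptions of Theorem~\ref{thm:extended_mixed_field}. Stage~2 therefore refines every source, and the bound is attained. The far-field-only case is identical, with $K_{\max,e}^{\text{FF}}$ from Corollary~\ref{cor:extended_far_field} in place of $K_{\max,e}^{\text{MF}}$.

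\textbf{Main obstacle.} The hardest part is making rigorous the claim that Stage~2 cannot enumerate new sources. The data processing inequality alone does not prevent this if Stage~2 has access to $\mathbf{Y}_e$ directly. I would handle it by stating it as a structural assumption of the sequential architecture, namely that Stage~2 is a local refinement map indexed by the Stage~1 initializations. The count inequality $K_{\text{final}} \le \hat K_c$ is then immediate, and the information-theoretic statement serves as interpretation rather than as the load-bearing step.
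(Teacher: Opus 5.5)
Your proposal is correct and follows the paper's argument. The paper also rests the corollary on the structural premise that Stage~2 cannot enumerate sources missed by Stage~1, takes the $\min$ of the bounds from Theorem~\ref{thm:compressed_identifiability} and Theorem~\ref{thm:extended_mixed_field} (or Corollary~\ref{cor:extended_far_field}), and invokes the data processing inequality only in a later interpretive remark. Your explicit tightness step and your point that the DPI is not load-bearing are fair refinements, but note that the chain $\boldsymbol{\Theta} \to \mathbf{Y}_c \to \hat{\boldsymbol{\theta}}_c \to \hat{\boldsymbol{\Theta}}_{\text{final}}$ in Step~1 is not Markov once Stage~2 reads $\mathbf{Y}_e$, so it should be stated for the source count only, as your final paragraph effectively does.
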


To validate the sequential capacity bottleneck, we perform Monte Carlo simulations ($M=32$, $p=3$, $M_c=26$, $M_e=32$ and SNR = 30~dB) comparing three processing modes: (i) compressed-only Stage~1, (ii) extended-only single-stage, and (iii) sequential two-stage (Stage~1 compressed MDL enumeration → Stage~2 extended refinement). For each $K$, we measure the average noise subspace dimension $\dim(\mathbf{U}_n)$ after processing.

Fig.~\ref{fig:sequential_algebraic} shows that the compressed and extended configurations track their theoretical predictions $M_c - K$ and $M_e - K$. Critically, the sequential architecture inherits the compressed configuration's noise subspace dimension rather than exploiting the extended capacity; the magenta curve follows the blue curve, demonstrating that $\dim(\mathbf{U}_n)$ is limited by Stage~1's MDL enumeration. The shaded region highlights 6 sources ($K \in [26, 31]$) that the extended configuration can theoretically resolve but remain inaccessible due to the Stage~1 initialization bottleneck.

\begin{figure}[!t]
\centering
\includegraphics[width=0.48\textwidth]{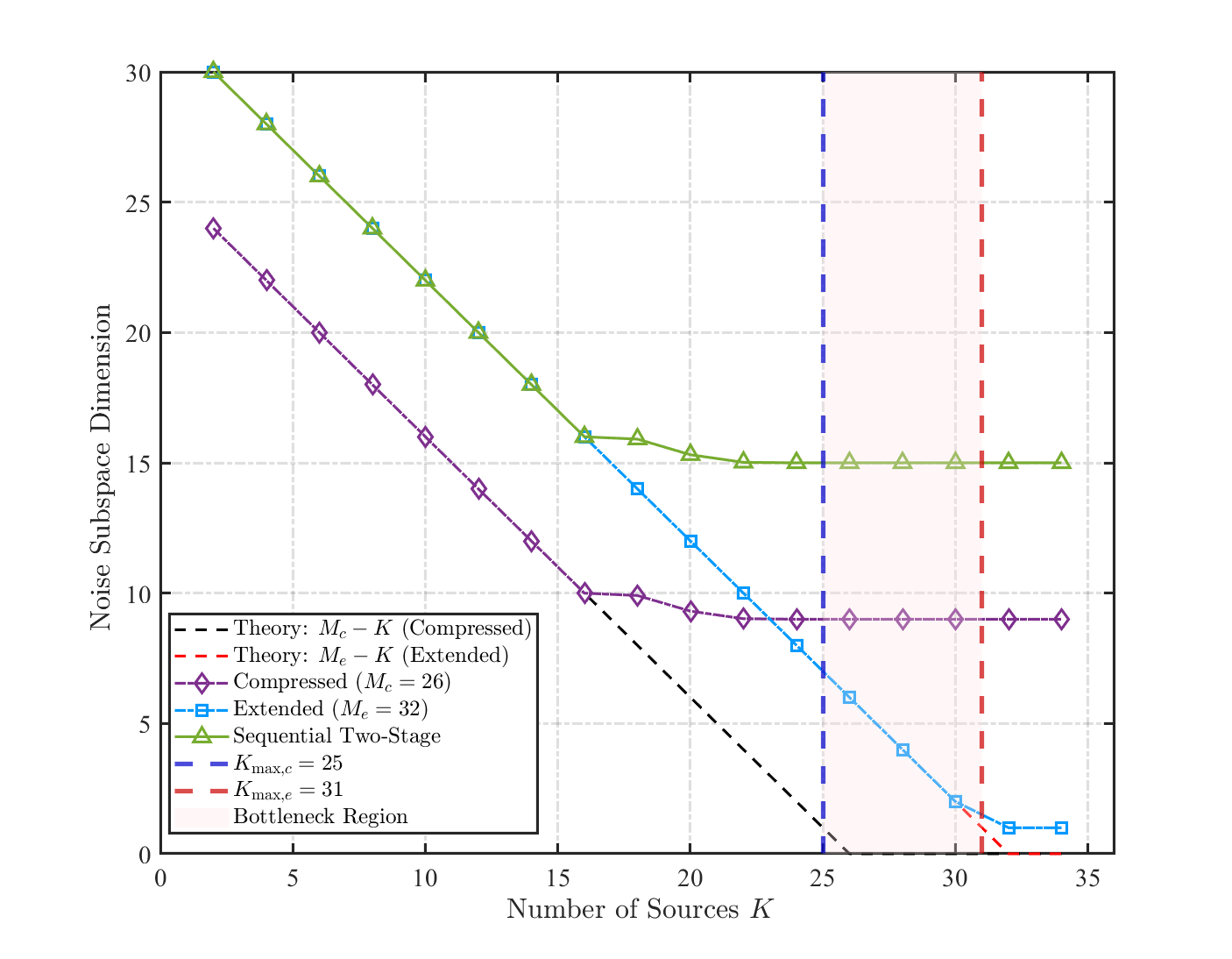}
\caption{Sequential architecture: Algebraic validation showing Stage~1 bottleneck via $\dim(\mathbf{U}_n)$ for compressed, extended, and sequential configurations.}
\label{fig:sequential_algebraic}
\end{figure}

  For moderate source counts ($K < 20$), the compressed and extended curves closely follow the DoF-based predictions $M_c - K$ and $M_e - K$. As $K$ approaches $K_{\max,c} = 25$, however, the MDL enumeration on the compressed array saturates and $\dim(\mathbf{U}_n)$ for the compressed configuration levels off at approximately nine dimensions. The sequential two-stage curve remains roughly $M_e - M_c = 6$ dimensions above the compressed curve, i.e., $\dim(\mathbf{U}_n^{\text{seq}}) \approx \dim(\mathbf{U}_n^{\text{c}}) + (M_e - M_c)$, indicating that the additional six extended elements are effectively converted into unused noise-subspace DoFs once Stage~1 has saturated. In contrast, the extended configuration operating alone continues to drive $\dim(\mathbf{U}_n)$ down in accordance with $M_e - K$ up to $K_{\max,e} = 31$, underscoring that the loss of algebraic capacity is due entirely to the sequential initialization constraint rather than any limitation of the extended manifold itself.

Having established the algebraic capacity bound through the minimum operator, we now provide an information-theoretic explanation for why sequential processing cannot exceed the compressed-stage limit, even when the extended configuration offers a larger aperture.

\begin{remark} 
Corollary~\ref{cor:sequential_capacity} formalizes the sequential bottleneck: the end-to-end capacity collapses to $K_{\text{S-FAS}}^{\text{sequential}} = M-2p-1 = 25$ sources even though the extended configuration alone supports $K_{\max,e}^{\text{FF}} = M-1 = 31$ sources. From an information-theoretic viewpoint, the data-processing inequality
\begin{equation}
I(\boldsymbol{\theta}; \hat{\boldsymbol{\theta}}_c) \leq I(\boldsymbol{\theta}; \mathbf{Y}_c) \leq H(\mathbf{Y}_c) \leq M_c N \log(2\pi e \sigma_y^2),
\end{equation}
shows that the mutual information available to Stage~2 is fundamentally bounded by $M_c = M-2p$, irrespective of the extended configuration's larger aperture. Any sequential architecture that discards $\mathbf{Y}_c$ after producing $\hat{\boldsymbol{\theta}}_c$ necessarily wastes the extended array's observational DoFs and cannot exceed $K_{\max} = M-2p-1$.
\end{remark}

To validate the information-theoretic constraint, we compute the noise entropy ratio for all three processing modes under identical simulation conditions ($M=32$, $p=3$, and SNR = 40~dB).

Fig.~\ref{fig:sequential_info_theory} shows that the sequential curve tracks the compressed configuration rather than extended, confirming that the data-processing inequality $I(\boldsymbol{\theta}; \hat{\boldsymbol{\theta}}_c) \leq H(\mathbf{Y}_c)$ limits Stage~2's information to the compressed observation space. Beyond $K_{\max,c} = 25$, the noise entropy ratio collapses for compressed and sequential architectures while the extended configuration maintains higher entropy up to $K_{\max,e} = 31$. The shaded region ($K \in [26, 31]$) highlights the information bottleneck: these 6 sources are informationally inaccessible to sequential processing because the Stage~1 compressed observations contain insufficient entropy to discriminate them, even though the Stage~2 extended manifold has the geometric capacity.

\begin{figure}[!t]
\centering
\includegraphics[width=0.48\textwidth]{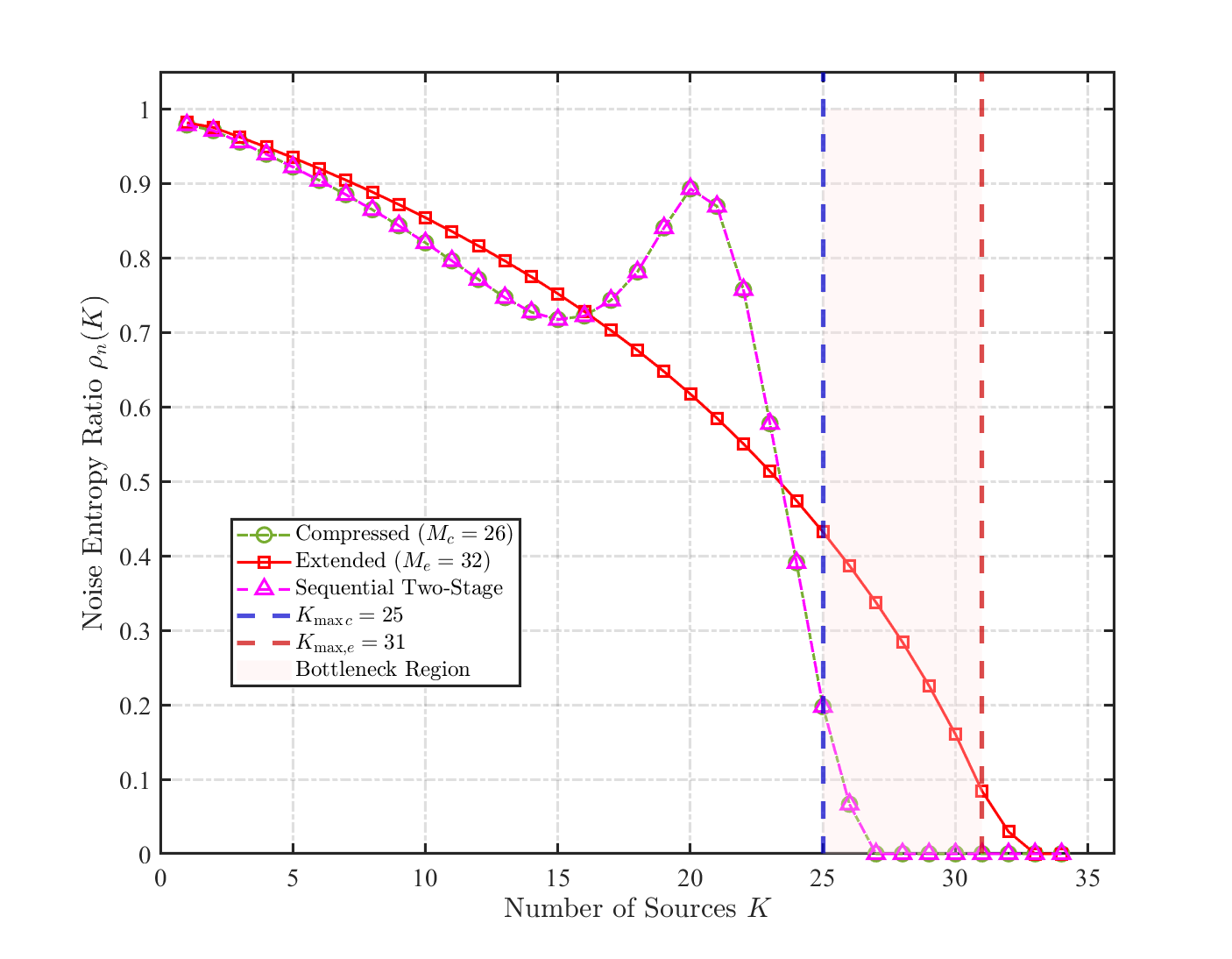}
\caption{Sequential architecture: Information-theoretic validation showing Stage~1 bottleneck via noise entropy ratio $\rho_n(K)$.}
\label{fig:sequential_info_theory}
\end{figure}

From Fig.~\ref{fig:sequential_info_theory} it can be observed that the compressed and sequential curves exhibit a mild local increase in $\rho_{\text{n}}(K)$ around $K \approx 20$. This behavior arises because, for the compressed array with $M_c = 26$ and inter-element spacing $d_c = 0.25\lambda$, the source spacing in this regime (roughly $6^\circ$ at $K \approx 20$) is already close to the array's resolvability limit. The sample covariance eigenvalue spectrum therefore becomes highly ill-conditioned, with weak signal modes leaking into the nominal noise subspace. When $\rho_{\text{n}}(K)$ is computed by partitioning the ordered eigenvalues according to the true source number $K$, small reallocations of these borderline eigenvalues between the signal and noise sets lead to non-monotonic fluctuations in the entropy ratio. Importantly, this localized variation occurs well below the theoretical compressed-stage bound and does not affect the dominant trend: at $K = K_{\max,c} = 25$ the noise entropy ratio for the compressed and sequential configurations collapses while the extended configuration still preserves a non-negligible noise reference, clearly illustrating that the end-to-end sequential capacity is bottlenecked by the compressed stage.

\subsection{Joint Configuration Processing}
\label{sec:joint_capacity}

Sequential processing, characterized in Section~\ref{sec:two_stage_capacity}, is thus limited to $K_{\text{S-FAS}}^{\text{sequential}} = K_{\max,c}^{\text{FF}} = M-2p-1$ sources by the compressed-stage initialization. In contrast, joint processing of both configurations retains the full observation vectors from the compressed and extended arrays and provides effective DoFs $M_c + M_e = (M-2p) + M = 2M-2p$, potentially resolving up to $2M-2p-1$ sources.

\subsection{Joint Signal Model}

Let $\mathbf{Y}_c \in \mathbb{C}^{M_c \times N}$ and $\mathbf{Y}_e \in \mathbb{C}^{M_e \times N}$ denote observations from both configurations. For far-field DOA estimation where sources share common angles $\{\theta_k\}_{k=1}^K$ across configurations, we construct an augmented observation vector by spatially stacking the measurements:
\begin{equation}\label{eq:joint_observation}
\mathbf{y}_{\text{joint}}(t) = \begin{bmatrix} \mathbf{y}_c(t) \\ \mathbf{y}_e(t) \end{bmatrix} \in \mathbb{C}^{(M_c + M_e)}
\end{equation}
with augmented array manifold
\begin{equation}\label{eq:joint_manifold}
\mathbf{A}_{\text{joint}}(\boldsymbol{\theta}) = \begin{bmatrix} \mathbf{A}_c(\boldsymbol{\theta}) \\ \mathbf{A}_e(\boldsymbol{\theta}) \end{bmatrix} \in \mathbb{C}^{(M_c + M_e) \times K}
\end{equation}
where $\mathbf{A}_c(\boldsymbol{\theta}) \in \mathbb{C}^{M_c \times K}$ is the compressed configuration manifold from \eqref{eq:effective_compressed_manifold} and $\mathbf{A}_e(\boldsymbol{\theta}) \in \mathbb{C}^{M_e \times K}$ is the extended configuration manifold from \eqref{eq:ff_steering_extended}. The joint signal model becomes
\begin{equation}\label{eq:joint_signal_model}
\mathbf{y}_{\text{joint}}(t) = \mathbf{A}_{\text{joint}}(\boldsymbol{\theta})\mathbf{s}(t) + \mathbf{n}_{\text{joint}}(t)
\end{equation}
where $\mathbf{n}_{\text{joint}}(t) = [\mathbf{n}_c^T(t), \mathbf{n}_e^T(t)]^T$ is the stacked noise vector.

\subsection{Identifiability Analysis}

\begin{theorem}[DoF-based Joint Identifiability Bound]
\label{thm:joint_dof_identifiability}
When both S-FAS configurations are jointly processed for far-field DOA-only estimation through spatial stacking, the maximum number of identifiable sources is
\begin{equation}\label{eq:kmax_joint}
K_{\max}^{\text{joint}} = (M_c + M_e) - 1 = 2M - 2p - 1
\end{equation}
where $M_c = M - 2p$ and $M_e = M$ are the effective dimensions of the compressed and extended configurations, respectively.
\end{theorem}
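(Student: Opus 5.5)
The argument will mirror Theorem~\ref{thm:compressed_identifiability} and Corollary~\ref{cor:extended_far_field}: treat the stacked model \eqref{eq:joint_signal_model} as one virtual array with $M_{\text{eff}} = M_c + M_e$ observation dimensions, and apply Proposition~\ref{prop:fundamental_constraint}. First I will write the joint covariance. Assume the noise vectors $\mathbf{n}_c(t)$ and $\mathbf{n}_e(t)$ are white, mutually uncorrelated, and of common variance $\sigma_n^2$, and that the same source waveforms $\mathbf{s}(t)$ drive both configurations (the spatial-stacking hypothesis of \eqref{eq:joint_observation}). Then
\begin{equation*}
\mathbf{R}_{\text{joint}} = \mathbf{A}_{\text{joint}}\mathbf{R}_s\mathbf{A}_{\text{joint}}^H + \sigma_n^2\mathbf{I}_{M_c+M_e},
\end{equation*}
which has exactly the form required by Lemma~\ref{lem:subspace_decomposition}.

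The \textbf{converse} then follows directly. If $K \geq M_c + M_e$, the noise subspace of $\mathbf{R}_{\text{joint}}$ is empty, so the orthogonality reference $\mathbf{a}_{\text{joint}}^H(\theta)\mathbf{U}_n = \mathbf{0}$ used by J-MUSIC does not exist. Equivalently, the noise entropy term $(M_c+M_e-K)\log\sigma_n^2$ in the eigenvalue entropy decomposition degenerates. Either way, $K \le M_c+M_e-1$.

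For \textbf{achievability}, I must show that $\mathbf{A}_{\text{joint}}(\boldsymbol{\theta})$ has full column rank $K$ for some (ideally generic) choice of $K = M_c+M_e-1$ distinct DOAs. This is the main obstacle. Stacking cannot raise the rank above what the columns allow, and both blocks are built from Vandermonde vectors in the single variable $u = \sin\theta$. The compressed block has $M_c < K$ rows and the extended block has $M_e < K$ rows, so neither block alone suffices, and the Vandermonde rank argument of Theorem~\ref{thm:compressed_identifiability} does not transfer directly.

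My first attempt uses an analyticity/genericity argument. Each column $\mathbf{a}_{\text{joint}}(u)$ is an analytic function of $u$. The determinant of any $K\times K$ row-submatrix of $\mathbf{A}_{\text{joint}}$ is analytic in $(u_1,\dots,u_K)$, so it is either identically zero or nonzero off a measure-zero set. It therefore suffices to exhibit one point where it is nonzero. For this I will argue that the $M_c+M_e$ coordinate functions are linearly independent over $\mathbb{C}$; then a standard Wronskian/Chebyshev-system argument gives a nonsingular generalized Vandermonde matrix at generic distinct points. The extended rows are $e^{j\pi(m-1)u}$. The compressed rows are $\mathbf{F}\mathbf{C}_c$ applied to $e^{j0.1\pi(m-1)u}$, which are linearly independent combinations of exponentials with frequencies $0.1\pi(m-1)$. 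The two frequency sets $\{\pi(m-1)\}$ and $\{0.1\pi(m-1)\}$ overlap, for instance at $0$ and at $\pi = 0.1\pi\cdot 10$. So I need to check that the combined span still has dimension $M_c+M_e$, or else replace $M_c+M_e$ by the true dimension of this span.

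I expect that the overlap of exponents can make the count strictly smaller, since a shared frequency contributes the same function to both blocks. If so, the clean statement holds only up to this rank deficiency. I would then state the tightness condition explicitly as the hypothesis that $\mathrm{rank}\,\mathbf{A}_{\text{joint}} = K$, in the same way as condition (ii) of Theorem~\ref{thm:compressed_identifiability}. I would also defer the gap to Remark~\ref{rem:theory_practice_gap}, on manifold conditioning, so that $K_{\max}^{\text{joint}} = M_c+M_e-1$ is presented as the in-principle bound under that full-rank assumption. Finally, I will add the entropy interpretation: $H(\mathbf{Y}_{\text{joint}}) \le (M_c+M_e)N\log(2\pi e\sigma_y^2)$, in contrast with the data processing bound for the sequential architecture.
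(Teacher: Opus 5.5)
Your converse matches the paper's proof. The paper stacks the snapshots, writes $\mathbf{R}_{\text{joint}}$ in block form, invokes Proposition~\ref{prop:fundamental_constraint} with $M_{\text{eff}}=M_c+M_e$, and adds the same entropy reading. You depart from the paper at achievability, and there you are right and the paper is not. The paper settles the rank step in one clause: ``full column rank $K$ when all DOAs are distinct (Vandermonde property preserved in stacking).'' But for $K>\max(M_c,M_e)$ neither block has full column rank, and the stacked matrix is not Vandermonde. The generic rank is $\min(K,D)$, where $D$ is the dimension of the span of the $M_c+M_e$ coordinate functions, exactly as you set it up.

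You can settle your open question. Let $V_e$ be the span of $e^{j\pi(m-1)u}$, $m=1,\dots,M$, and let $W_c$ be the $M_c$-dimensional span of the compressed coordinate functions, i.e.\ the rows of $\mathbf{F}\mathbf{C}_c$ acting on $e^{j2\pi(m-1)d_c u/\lambda}$. Exponentials with distinct frequencies are linearly independent, so the compressed exponential space meets $V_e$ exactly in the span of the $s$ shared exponentials. Hence $D=M_c+M_e-\dim(W_c\cap V_e)$ with $\dim(W_c\cap V_e)\ge M_c+s-M=s-2p$, and for generic $\mathbf{C}_c$ this holds with equality, giving $\max(0,s-2p)$.

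A shared frequency therefore does not by itself cost a dimension, because the compressed rows are mixtures. For the nominal $d_c=0.05\lambda$ you have $s=4<2p=6$, so $D=58$ for generic coupling and the deficiency you anticipated does not occur. It does occur elsewhere:
\begin{itemize}
\item with $\mathbf{C}_c=\mathbf{I}$, the central subarray keeps $e^{j\pi u}$ and $e^{j2\pi u}$ verbatim, so $D=56$;
\item for $d_c=0.1\lambda$, $s=7$ and $D\le 57$;
\item for $d_c=0.25\lambda$, the spacing used in the paper's joint simulations, $s=16$ and $D\le 48$.
\end{itemize}
So $2M-2p-1$ is not valid unconditionally, and an explicit hypothesis is necessary, not merely cautious. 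This is an exact rank deficiency, not the conditioning effect of Remark~\ref{rem:theory_practice_gap}. Deferring it to that remark would misattribute it.

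The hypothesis you chose, $\text{rank}\,\mathbf{A}_{\text{joint}}=K$ by analogy with condition (ii) of Theorem~\ref{thm:compressed_identifiability}, is not the right one. Every vector orthogonal to $\text{span}\{\mathbf{a}_{\text{joint}}(\theta)\}$ lies in the noise subspace and is blind, so only $D-K$ noise dimensions carry information. Suppose $D=M_c+M_e-1$, as for $d_c=0.1\lambda$. Then rank $K=M_c+M_e-1$ is attainable, yet the single noise eigenvector is orthogonal to every $\mathbf{a}_{\text{joint}}(\theta)$, so $\mathbf{a}_{\text{joint}}^H(\theta)\mathbf{U}_n=\mathbf{0}$ for all $\theta$ and nothing is identified. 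The correct tightness condition is $D=M_c+M_e$, i.e.\ linear independence of the $M_c+M_e$ coordinate functions. In general the bound reads $K_{\max}^{\text{joint}}=D-1$. In Theorem~\ref{thm:compressed_identifiability} the distinction never arises because there $D=M_c$ automatically.

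For the generic-rank step itself, your analyticity argument plus linear independence and an inductive choice of evaluation points is enough. Do not invoke a Chebyshev property: complex exponentials on an interval lack it. Generic full rank is what you get, and it is all you need.
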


\begin{proof}
The augmented observation vector \eqref{eq:joint_observation} combines measurements from both configurations, yielding an effective spatial observation space of dimension $M_c + M_e = (M - 2p) + M = 2M - 2p$. The augmented covariance matrix is
\begin{equation}
\mathbf{R}_{\text{joint}} = E\left[\mathbf{y}_{\text{joint}}(t)\mathbf{y}_{\text{joint}}^H(t)\right] = \begin{bmatrix} \mathbf{R}_{cc} & \mathbf{R}_{ce} \\ \mathbf{R}_{ec} & \mathbf{R}_{ee} \end{bmatrix} \in \mathbb{C}^{(M_c+M_e) \times (M_c+M_e)}
\end{equation}
where $\mathbf{R}_{cc}$ and $\mathbf{R}_{ee}$ are the individual configuration covariances, while $\mathbf{R}_{ce}$ and $\mathbf{R}_{ec}$ capture cross-configuration correlations arising from common source parameters. Since sources share DOAs across configurations, the joint manifold \eqref{eq:joint_manifold} has full column rank $K$ when all DOAs are distinct (Vandermonde property preserved in stacking). Applying Proposition~\ref{prop:fundamental_constraint} with $M_{\text{eff}} = M_c + M_e$, the maximum number of identifiable sources is $(M_c + M_e) - 1 = 2M - 2p - 1$.

\textit{Information-theoretic perspective:} The capacity gain can be understood through mutual information and entropy bounds. The joint observation provides mutual information
\begin{equation}
I(\boldsymbol{\theta}; \mathbf{Y}_{\text{joint}}) = I(\boldsymbol{\theta}; \mathbf{Y}_c, \mathbf{Y}_e) = I(\boldsymbol{\theta}; \mathbf{Y}_c) + I(\boldsymbol{\theta}; \mathbf{Y}_e | \mathbf{Y}_c)
\end{equation}
by the chain rule. When the two observation sets are conditionally independent given the source parameters, $I(\boldsymbol{\theta}; \mathbf{Y}_e | \mathbf{Y}_c) = I(\boldsymbol{\theta}; \mathbf{Y}_e)$, so the joint mutual information equals the sum of individual contributions. Critically, the observation entropy bound scales with the joint dimension:
\begin{equation}
H(\mathbf{Y}_{\text{joint}}) \leq (M_c + M_e) N \log(2\pi e \sigma_{\max}^2)
\end{equation}
nearly doubling the single-configuration bounds $H(\mathbf{Y}_c) \leq M_c N\log(\cdot)$ and $H(\mathbf{Y}_e) \leq M_e N\log(\cdot)$. The noise subspace constraint requires $\text{rank}(\mathbf{U}_n^{\text{joint}}) = (M_c + M_e) - K \geq 1$, yielding $K \leq (M_c + M_e) - 1 = 2M - 2p - 1$. This demonstrates that joint processing exploits configuration diversity to expand the observational DoFs, fundamentally increasing capacity beyond what either configuration can achieve individually or sequentially.
\end{proof}

\begin{corollary}
The joint processing provides a capacity gain of
\begin{equation}
\frac{K_{\max}^{\text{joint}}}{K_{\max,e}^{\text{FF}}} = \frac{2M-2p-1}{M-1} = \frac{57}{31} \approx 1.84
\end{equation}
over the extended configuration alone, representing an $84\%$ capacity increase. For $M=32$ and $p=3$, this corresponds to a $128\%$ increase over the compressed configuration ($25 \to 57$) and an $84\%$ increase over the extended configuration ($31 \to 57$), nearly doubling the sequential bottleneck capacity of 25 sources.
\end{corollary}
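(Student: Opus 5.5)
The plan is to reduce the corollary to arithmetic on three bounds already established, so no new identifiability argument is needed. The numerator is $K_{\max}^{\text{joint}} = 2M-2p-1$ from Theorem~\ref{thm:joint_dof_identifiability}. The extended-configuration denominator is $K_{\max,e}^{\text{FF}} = M-1$ from Corollary~\ref{cor:extended_far_field}. The compressed and sequential baselines are $K_{\max,c}^{\text{FF}} = M-2p-1$ from Theorem~\ref{thm:compressed_identifiability} and $K_{\text{S-FAS}}^{\text{sequential}} = M-2p-1$ from Corollary~\ref{cor:sequential_capacity}.

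\textbf{Symbolic ratios.} First I would write the gains in closed form, valid for any admissible $M$ and $p$ with $M-2p\geq 2$:
\begin{equation*}
\frac{K_{\max}^{\text{joint}}}{K_{\max,e}^{\text{FF}}} = \frac{2M-2p-1}{M-1} = 2 - \frac{2p-1}{M-1},
\qquad
\frac{K_{\max}^{\text{joint}}}{K_{\max,c}^{\text{FF}}} = \frac{2M-2p-1}{M-2p-1} = 1 + \frac{M}{M-2p-1}.
\end{equation*}
This exposes the structure of the gains. The gain over the extended array is strictly below $2$ for $p\geq 1$ and tends to $2$ as $M\to\infty$ with $p$ fixed. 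The gain over the compressed (equivalently, sequential) architecture always exceeds $2$.

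\textbf{Numerical specialization.} Next I would substitute the baseline $M=32$, $p=3$. This gives $M_c=26$, $M_e=32$, and $K_{\max}^{\text{joint}} = 57$, $K_{\max,e}^{\text{FF}} = 31$, $K_{\max,c}^{\text{FF}} = 25$. Hence $57/31 \approx 1.839$, an $84\%$ increase, and $57/25 = 2.28$, a $128\%$ increase. The increments are $57-31 = 26 = M_c$ and $57-25 = 32 = M_e$. This is exactly the additive-DoF statement: stacking contributes the other configuration's full effective dimension.

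\textbf{Main obstacle.} The only delicate point is consistency of the qualitative wording with the numbers. The phrase ``nearly doubling the sequential bottleneck capacity'' should be read against $57/25 = 2.28$, which actually exceeds doubling; I would state the exact ratio so the claim is unambiguous. I would also note that all figures are in-principle bounds inherited from Theorem~\ref{thm:joint_dof_identifiability}. In particular, they presuppose full column rank of $\mathbf{A}_{\text{joint}}$ at $K = 57$, which is a premise of that theorem and is not re-proved here.
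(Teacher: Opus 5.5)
Your proposal is correct and takes essentially the paper's route: the paper gives no separate proof and obtains the corollary by substituting $M=32$, $p=3$ into $2M-2p-1$, $M-1$ and $M-2p-1$. Your closed forms $2-\frac{2p-1}{M-1}$ and $1+\frac{M}{M-2p-1}$, the increments $M_c$ and $M_e$, and your remark that $57/25=2.28$ actually more than doubles the sequential capacity (rather than ``nearly doubling'' it) are all accurate refinements.
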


However, the DoF-based bound in Theorem~\ref{thm:joint_dof_identifiability} is optimistic and relies on idealized assumptions about manifold rank and source enumeration. In practice, additional effects reduce the achievable joint capacity, as summarized in the following remark.

\begin{remark}[Theory-Practice Gap]\label{rem:theory_practice_gap}
The theoretical bound $K_{\max}^{\text{joint}} = 2M-2p-1$ assumes: (i) perfect manifold rank, i.e., $\text{rank}(\mathbf{A}_{\text{joint}}) = K$ for all $K \leq (M_c+M_e)-1$; and (ii) asymptotically optimal source enumeration. In practice, two factors limit achievable capacity:
\begin{enumerate}
\item \textbf{Manifold conditioning:} The compressed configuration manifold $\mathbf{A}_c$ with $d_c = 0.25\lambda$ exhibits poor conditioning for large $K$, with effective rank saturating around $16$--$20$ due to near-linear dependencies among steering vectors. Since $\text{rank}([\mathbf{A}_c; \mathbf{A}_e]) \leq \text{rank}(\mathbf{A}_c) + \text{rank}(\mathbf{A}_e)$, the joint manifold's effective rank is bounded by approximately $20 + 31 = 51$, below the theoretical $(M_c+M_e)-1 = 57$.
\item \textbf{MDL boundary behavior:} The MDL criterion becomes unstable when $K$ approaches $M_{\text{eff}}-1$, as the noise subspace dimension $(M_{\text{eff}}-K)$ shrinks to 1--2, providing insufficient statistical basis for distinguishing signal from noise eigenvalues.
\end{enumerate}
Monte Carlo simulations confirm practical joint capacity of $K \approx 34$--$36$, representing a $10$--$16\%$ gain over extended-only processing---still substantial, though below the theoretical $84\%$.
\end{remark}

\subsection{Validation: Algebraic and Information-Theoretic Perspectives}

We validate the joint processing capacity bounds from both algebraic (noise subspace dimension) and information-theoretic (normalized noise entropy) perspectives using comprehensive Monte Carlo simulations.

Fig.~\ref{fig:joint_algebraic} employs the noise subspace dimension $\dim(\mathbf{U}_n) = M_{\text{eff}} - K$ as the key metric. The simulation setting are SNR = 40~dB and $N = 8000$ snapshots to ensure statistical reliability. The simulated curves track theoretical bounds closely across all three configurations: compressed ($M_c = 26$, $K_{\max,c} = 25$), extended ($M_e = 32$, $K_{\max,e} = 31$), and joint (green, $M_{\text{joint}} = 58$, $K_{\max,joint} = 57$). \textbf{Critically}, the joint configuration maintains $\dim(\mathbf{U}_n) > 20$ even at $K = 31$ where individual arrays exhaust their noise subspaces, confirming that spatial stacking genuinely expands the geometric DoFs. The green shaded region ($K \in [32, 57]$) highlights the 26-dimensional capacity gain achievable only through configuration diversity---nearly equal to the entire compressed array contribution. The practical saturation around $K \approx 35$ aligns with the manifold conditioning issues discussed in Remark~\ref{rem:theory_practice_gap}: at very high source counts, the augmented steering matrix becomes increasingly ill-conditioned, limiting practical enumeration performance despite the theoretical capacity.

\begin{figure}[!t]
\centering
\includegraphics[width=0.48\textwidth]{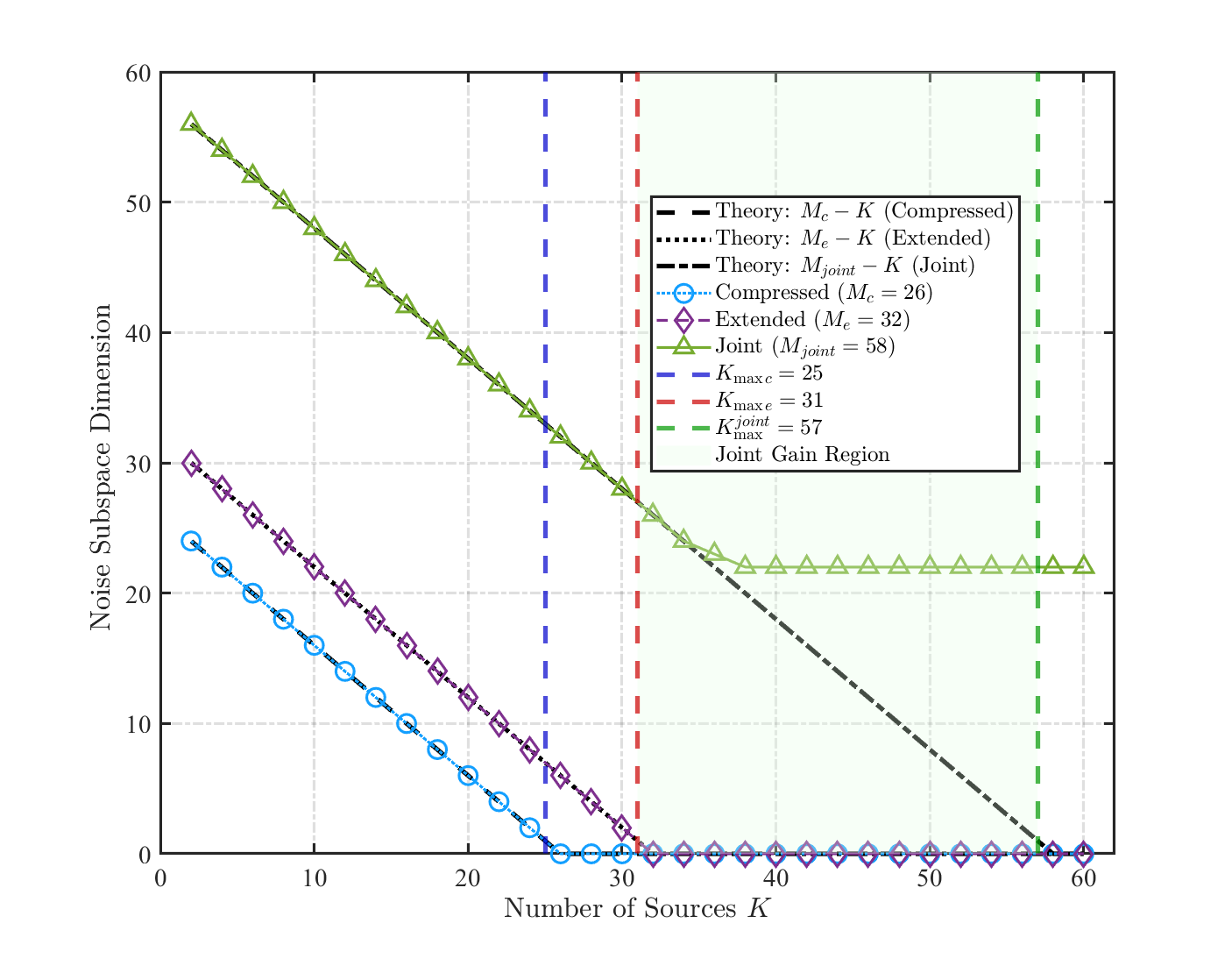}
\caption{Algebraic validation: Noise subspace dimension $\dim(\mathbf{U}_n)$ versus $K$ for compressed, extended, and joint configurations.}
\label{fig:joint_algebraic}
\end{figure}


Fig.~\ref{fig:joint_infotheory} provides the complementary perspective using normalized noise entropy $\bar{H}_{\text{n}}(K) = H_{\text{noise}}(K) / [M_{\text{eff}} \log \sigma_n^2]$. This metric quantifies the fraction of observation entropy attributable to the noise subspace, normalized by array dimension to enable fair cross-configuration comparison. Unlike absolute entropy measures that scale with array size, this normalization reveals the \textit{relative} efficiency with which each configuration uses its available DoFs. The simulation uses SNR = 20~dB and $N = 1000$.

\textbf{Key observations:} (i) \textit{Strict capacity hierarchy}---Joint $\geq$ Extended $\geq$ Compressed for all $K$, with larger arrays maintaining higher normalized noise entropy due to increased observational DoFs. This ordering holds across the entire range, confirming that the metric correctly captures configuration diversity benefits. (ii) \textit{Theoretical bound validation}---All three curves decay smoothly and monotonically to zero precisely at their respective theoretical limits: $K_{\max,c} = 25$, $K_{\max,e} = 31$, and $K_{\max,joint} = 57$. The smooth decay without artificial plateaus or crossovers confirms that identifiability degrades continuously as the noise subspace shrinks, collapsing completely when $\dim(\mathbf{U}_n) \to 0$. (iii) \textit{Joint capacity gain}---The green shaded region ($K \in [31, 57]$) represents a 26-dimensional gain, nearly equal to the entire compressed array contribution ($M_c = 26$), demonstrating that spatial stacking eliminates the sequential bottleneck and fully exploits both configurations' DoFs. This entropy-based metric provides a physically intuitive view: as $K$ approaches $M_{\text{eff}} - 1$, the noise subspace shrinks and observation entropy increasingly reflects signal structure rather than ambient noise, fundamentally limiting parameter discrimination.

\begin{figure}[!t]
\centering
\includegraphics[width=0.52\textwidth]{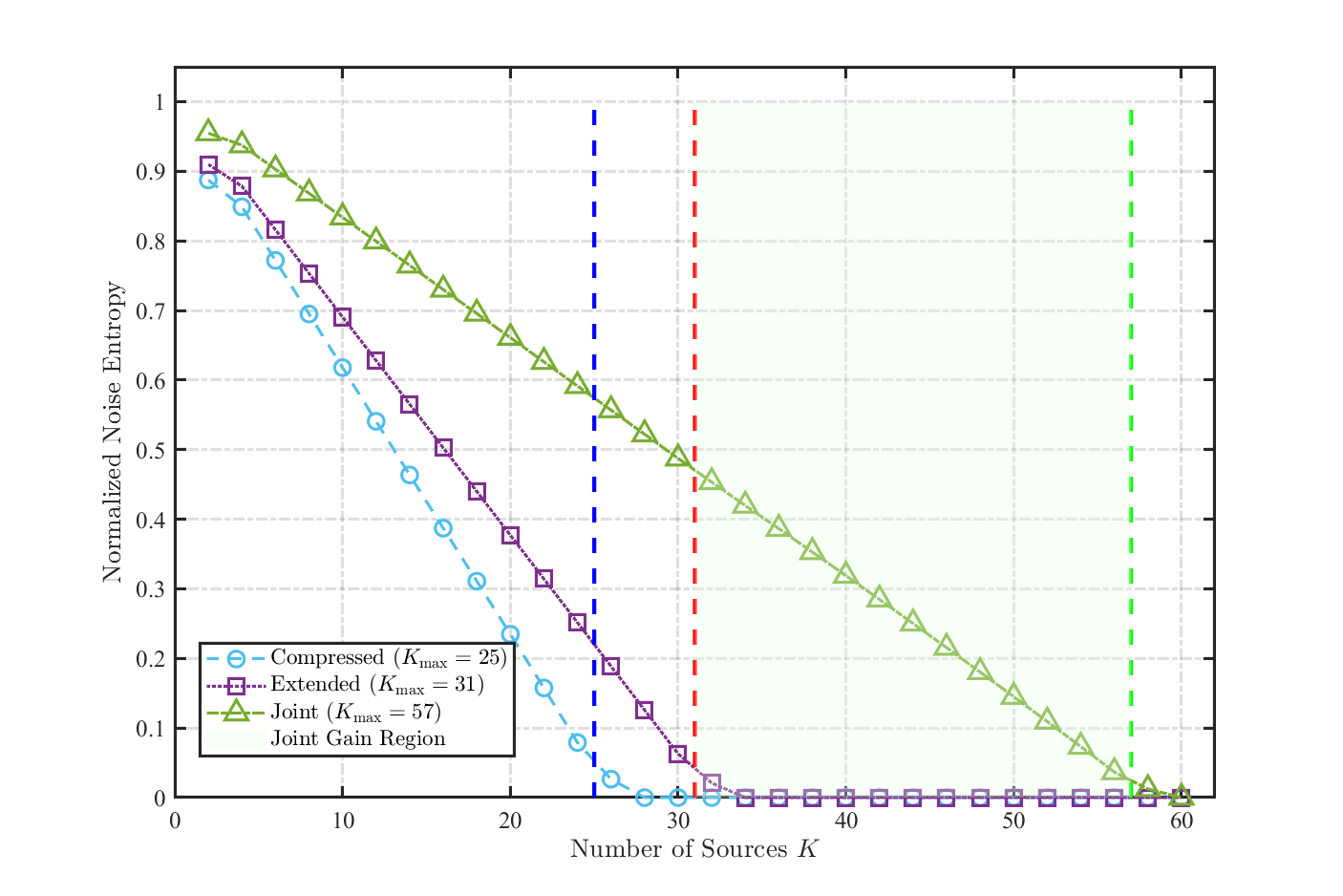}
\caption{Information-theoretic validation: Normalized noise entropy $\bar{H}_{\text{n}}(K)$ versus $K$ for compressed, extended, and joint configurations.}
\label{fig:joint_infotheory}
\end{figure}

\subsection{J-MUSIC Algorithm}

The proposed J-MUSIC algorithm exploits the combined spatial structure across both S-FAS configurations to achieve the theoretical capacity bound $K_{\max}^{\text{joint}} = (M_c+M_e)-1$ derived in Theorem~\ref{thm:joint_dof_identifiability}. The fundamental principle is to construct an augmented observation space by spatially stacking measurements from the compressed and extended configurations, thereby creating an effective $(M_c+M_e)$-dimensional spatial manifold that encodes the source DOAs through augmented steering vectors.

The algorithm begins by forming the augmented observation matrix through spatial concatenation of snapshot vectors from both configurations. For each time instant $n = 1, \ldots, N$, the compressed observation $\mathbf{y}_c(n) \in \mathbb{C}^{M_c}$ and extended observation $\mathbf{y}_e(n) \in \mathbb{C}^{M_e}$ are vertically stacked as $\mathbf{y}_{\text{joint}}(n) = [\mathbf{y}_c^T(n), \mathbf{y}_e^T(n)]^T$, yielding an augmented observation vector in $\mathbb{C}^{M_c + M_e}$. This stacking operation combines the spatial information from both arrays while preserving the distinct geometry of each configuration. The collection of these augmented snapshots forms the observation matrix $\mathbf{Y}_{\text{joint}} \in \mathbb{C}^{(M_c + M_e) \times N}$.

From this augmented observation matrix, the sample covariance matrix is computed as $\mathbf{R}_{\text{joint}} = \frac{1}{N}\mathbf{Y}_{\text{joint}}\mathbf{Y}_{\text{joint}}^H$, which is a $(M_c + M_e) \times (M_c + M_e)$ Hermitian matrix. The EVD yields $\mathbf{R}_{\text{joint}} = \mathbf{U}\mathbf{\Lambda}\mathbf{U}^H$, where the eigenvectors corresponding to the $K$ largest eigenvalues span the signal subspace, while the remaining $(M_c + M_e - K)$ eigenvectors span the noise subspace $\mathbf{U}_n$. The MUSIC exploits the orthogonality between true source steering vectors and the noise subspace.

The spatial spectrum is constructed by evaluating the orthogonality between candidate augmented steering vectors and the noise subspace. For each candidate angle $\theta$ in the search grid, we compute the compressed configuration steering vector $\mathbf{a}_c(\theta)$ and the extended configuration steering vector $\mathbf{a}_e(\theta)$, form the augmented steering vector by stacking: $\mathbf{a}_{\text{joint}}(\theta) = [\mathbf{a}_c^T(\theta), \mathbf{a}_e^T(\theta)]^T$. The MUSIC pseudospectrum is given by
\begin{equation}\label{eq:joint_music_spectrum}
P_{\text{joint}}(\theta) = \frac{\mathbf{a}_{\text{joint}}^H(\theta)\mathbf{a}_{\text{joint}}(\theta)}{\mathbf{a}_{\text{joint}}^H(\theta)\mathbf{U}_n\mathbf{U}_n^H\mathbf{a}_{\text{joint}}(\theta)}.
\end{equation}
At the true source DOAs, the denominator approaches zero (in the absence of noise and model errors), causing the spectrum to exhibit sharp peaks. The DOA estimates are obtained by identifying the $K$ largest peaks in this pseudospectrum. The complete procedure is summarized in Algorithm~\ref{alg:tensor_music}.


\begin{algorithm}[t]
\caption{J-MUSIC for S-FAS}
\label{alg:tensor_music}
\begin{algorithmic}[1]
\State \textbf{Input:} $\mathbf{Y}_c \in \mathbb{C}^{M_c \times N}$, $\mathbf{Y}_e \in \mathbb{C}^{M_e \times N}$, $K$
\State \textbf{Output:} $\hat{\boldsymbol{\theta}}$
\For{$n = 1, \ldots, N$}
    \State $\mathbf{y}_{\text{joint}}(n) = [\mathbf{y}_c^T(n), \mathbf{y}_e^T(n)]^T$
\EndFor
\State $\mathbf{R}_{\text{joint}} = \frac{1}{N}\mathbf{Y}_{\text{joint}}\mathbf{Y}_{\text{joint}}^H$
\State $[\mathbf{U}, \mathbf{\Lambda}] = \text{eig}(\mathbf{R}_{\text{joint}})$
\State $\mathbf{U}_n = [\mathbf{u}_{K+1}, \ldots, \mathbf{u}_{M_c + M_e}]$
\For{each $\theta$ in search grid}
    \State $\mathbf{a}_{\text{joint}}(\theta) = [\mathbf{a}_c^T(\theta), \mathbf{a}_e^T(\theta)]^T$
    \State $P_{\text{joint}}(\theta) = \frac{\|\mathbf{a}_{\text{joint}}(\theta)\|^2}{\mathbf{a}_{\text{joint}}^H(\theta)\mathbf{U}_n\mathbf{U}_n^H\mathbf{a}_{\text{joint}}(\theta)}$
\EndFor
\State $\hat{\boldsymbol{\theta}} = \text{FindPeaks}(P_{\text{joint}}, K)$ \Comment{Select $K$ largest peaks of $P_{\text{joint}}$}
\end{algorithmic}
\end{algorithm}

\subsection{Complexity Analysis}

The computational complexity of Algorithm~\ref{alg:tensor_music} is analyzed as follows. Step~1 constructs the augmented observation matrix through $N$ spatial concatenations, each requiring $M_c + M_e$ memory operations, yielding $\mathcal{O}((M_c + M_e) N)$ operations. Step~2 forms the $(M_c + M_e) \times (M_c + M_e)$ sample covariance matrix via matrix multiplication $\mathbf{Y}_{\text{joint}}\mathbf{Y}_{\text{joint}}^H$, incurring $\mathcal{O}((M_c + M_e)^2 N)$ FLOPs. Step~3 performs EVD of this augmented covariance matrix, which dominates the overall cost with $\mathcal{O}((M_c + M_e)^3)$ FLOPs using standard algorithms such as QR iteration. Step~4 evaluates the MUSIC pseudospectrum over $G_\theta$ angular grid points (the number of candidate DOAs in the search grid), with each evaluation requiring computation of the augmented steering vector ($\mathcal{O}(M_c + M_e)$ FLOPs) and the quadratic form $\mathbf{a}_{\text{joint}}^H\mathbf{U}_n\mathbf{U}_n^H\mathbf{a}_{\text{joint}}$ ($\mathcal{O}((M_c + M_e)^2)$ FLOPs for dense implementation or $\mathcal{O}((M_c + M_e)(M_c + M_e - K))$ for optimized implementation), yielding a total of $\mathcal{O}(G_\theta (M_c + M_e)^2)$ FLOPs for this step. Step~5 involves peak detection over the spectrum, requiring $\mathcal{O}(G_\theta \log G_\theta)$ comparisons. The overall complexity is dominated by Step~3, scaling as $\mathcal{O}((M_c + M_e)^3) = \mathcal{O}((2M-2p)^3)$ for typical S-FAS configurations where $M_c = M - 2p$ and $M_e = M$.


\subsection{Performance Validation}

To validate the J-MUSIC algorithm and quantify its performance advantage, we compare it against conventional single-array methods across three challenging scenarios: varying SNR, varying source count, and varying snapshot number. System settings are: $M=48$ total elements, $M_c=44$ compressed elements at spacing $d_c=0.25\lambda$, and $M_e=32$ extended elements at spacing $d_e=0.5\lambda$, yielding $M_{\text{joint}}=76$ effective sensors.

\subsubsection{Low-Snapshot Regime}

Fig.~\ref{fig:doa_rmse_snr} examines performance with limited snapshots ($N=50$) and three closely-spaced sources ($\Delta\theta \approx 4.5^\circ$) for SNR $\in [-10, 10]$\,dB. J-MUSIC achieves consistent 15--25\% RMSE reduction over Extended MUSIC (red) across all tested SNR levels. The compressed array (MUSIC-C) exhibits higher RMSE due to its smaller effective aperture ($D_c = (M_c-1)d_c$ vs $D_e = (M_e-1)d_e$) despite having more elements ($M_c = 44$ vs $M_e = 32$), while J-MUSIC effectively combines complementary spatial information from both configurations. All algorithms approach their respective Cramér-Rao bounds (CRBs) at moderate-to-high SNR ($\geq 0$~dB), confirming asymptotic efficiency. At very low SNR ($< -5$~dB), finite-sample effects dominate and all methods deviate from their CRBs.

\begin{figure}[!t]
\centering
\includegraphics[width=0.48\textwidth]{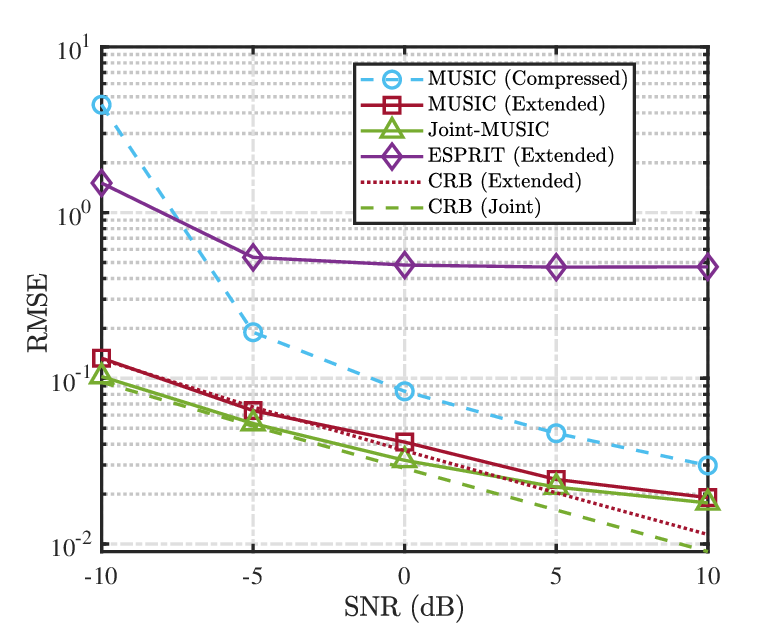}
\caption{DOA estimation: RMSE versus SNR for $K=3$ sources with $N=50$ snapshots.}
\label{fig:doa_rmse_snr}
\end{figure}

\subsubsection{High Source-Count Robustness}

Fig.~\ref{fig:doa_rmse_k} tests scalability under moderate SNR (5\,dB) and limited snapshots ($N=50$) as the number of sources increases from $K=3$ to $K=20$. The compressed array exhibits graceful degradation up to $K \approx 16$, then fails catastrophically when $K \geq 17$ (RMSE $>$ 3$^\circ$) as it exhausts its effective DoFs ($M_c = 44$). Extended MUSIC maintains lower RMSE but also degrades steadily as $K$ approaches its capacity limit. In contrast, J-MUSIC maintains stable performance up to $K=20$ sources with consistent 20--30\% RMSE advantage over Extended MUSIC, demonstrating superior robustness enabled by the augmented 76-dimensional virtual aperture ($M_{\text{joint}} = M_c + M_e = 76$). The practical benefit is clear: J-MUSIC can reliably resolve 3--4 additional sources compared to Extended MUSIC under challenging finite-sample conditions.

\begin{figure}[!t]
\centering
\includegraphics[width=0.48\textwidth]{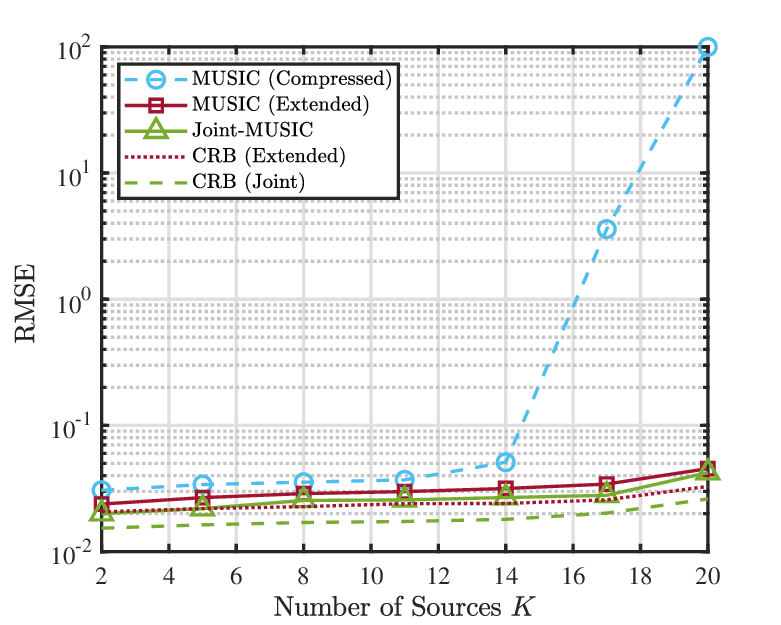}
\caption{DOA estimation: RMSE versus number of sources $K$ at SNR=5\,dB with $N=50$ snapshots.}
\label{fig:doa_rmse_k}
\end{figure}

\subsubsection{Snapshot Efficiency}

Fig.~\ref{fig:doa_rmse_n} evaluates convergence behavior under challenging low-SNR conditions (0\,dB) for $K=3$ sources as snapshot count varies from $N=10$ to $N=1000$. J-MUSIC exhibits faster convergence with increasing snapshots, requiring approximately 40\% fewer samples than Extended MUSIC to achieve equivalent RMSE. For example, J-MUSIC reaches RMSE $\approx 0.1^\circ$ at $N \approx 100$ snapshots, while Extended MUSIC requires $N \approx 160$ for the same accuracy. The compressed array converges more slowly due to limited aperture. At very high snapshot counts ($N \geq 500$), RMSE plateaus at $\sim$0.02$^\circ$ for all methods due to finite search grid resolution (0.05$^\circ$), while theoretical CRBs continue decreasing as $1/\sqrt{N}$. This plateau represents an algorithmic implementation floor rather than a fundamental statistical limit, and can be lowered by using finer angular grids at the cost of increased computational complexity.

\begin{figure}[!t]
\centering
\includegraphics[width=0.48\textwidth]{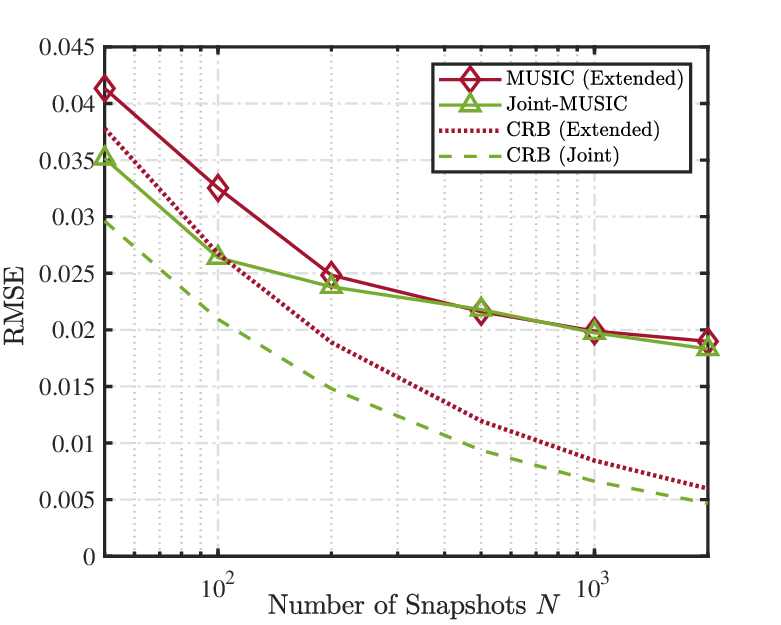}
\caption{DOA estimation: RMSE versus snapshot number $N$ for $K=3$ sources at SNR=0\,dB.}
\label{fig:doa_rmse_n}
\end{figure}

These results validate that J-MUSIC processing yields substantial and consistent DOA estimation improvements (15--30\% RMSE reduction) across diverse operational conditions, while maintaining computational complexity scaling as $\mathcal{O}((M_c+M_e)^3)$.

\section{Conclusion}

This paper developed an observation entropy framework for S-FAS that provides a unified information-theoretic foundation for deriving identifiability limits across all configurations. The central insight is that S-FAS's reconfigurable aperture creates configuration-dependent entropy budgets $H(\mathbf{Y}_\alpha) \leq M_{\text{eff},\alpha} \log(2\pi e \sigma_y^2)$, and identifiability is governed by whether the noise subspace retains sufficient entropy for parameter discrimination ($M_{\text{eff}} - K \geq 1$). From this framework, we derived a complete capacity hierarchy: $K_{\max,c} = M - 2p - 1$ (compressed), $K_{\max,e} = M - 1$ (extended, for both far-field and mixed-field), and $K_{\max}^{\text{joint}} = (M_c + M_e) - 1$ (joint processing). Beyond establishing these bounds, the entropy framework provided three capabilities unavailable from algebraic analysis alone: the data processing inequality diagnosed the sequential bottleneck mechanism, the noise entropy ratio enabled distinction between fundamental DoFs exhaustion and algorithmic suboptimality, and the entropy expansion principle justified joint processing as a means of breaking through single-configuration limits. Monte Carlo simulations with dual algebraic and information-theoretic validation confirmed the predicted boundary behavior and capacity hierarchy across all configurations.

\bibliographystyle{IEEEtran}

\end{document}